\documentclass[tikz,reqno,11pt]{amsart}
\usepackage[linktocpage]{hyperref}
\usepackage[utf8]{inputenc}
\usepackage{graphicx}
\usepackage{slashed}
\usepackage{amsmath, amscd}
\usepackage[mathscr]{eucal}
\usepackage{amssymb}
\usepackage{esint}
\usepackage{enumitem}
\usepackage{mathtools}
\usepackage{tensor}
\usepackage{comment}
\usepackage{bbm}
\usepackage{dsfont}
\usepackage{ifpdf}
\usepackage{tikz}

\ifpdf\else\usepackage{pstricks}\fi

\usepackage{pst-coil}
\usepackage{pst-grad}
\usepackage{pst-node, pst-plot}
\usepackage{epsfig}
\usepackage{pstricks-add}
\usepackage[nomessages]{fp}
\usepackage[space]{grffile} % For spaces in paths
\usepackage{etoolbox} % For spaces in paths

\makeatletter % For spaces in paths
\patchcmd\Gread@eps{\@inputcheck#1 }{\@inputcheck"#1"\relax}{}{}
\makeatother

\numberwithin{equation}{section}
\allowdisplaybreaks[1]

\newtheorem{Def}{Definition}[section]

\newtheorem{Thm}[Def]{Theorem}
\newtheorem{Prp}[Def]{Proposition}
\newtheorem{Lemma}[Def]{Lemma}
\newtheorem{Remark}[Def]{Remark}
\newtheorem{Corollary}[Def]{Corollary}

\newcommand{\Thanks}{\vspace*{.5em} \noindent \thanks}
\newcommand{\beq}{\begin{equation}}
\newcommand{\eeq}{\end{equation}}

\newcommand{\Proof}{\begin{proof}}
\newcommand{\QED}{\end{proof} \noindent}
\newcommand{\QEDrem}{\ \hfill $\Diamond$}

\newcommand{\la}{\langle}
\newcommand{\ra}{\rangle}

\newcommand{\braket}[2]{\Big\langle #1 \Big| #2 \Big\rangle}
\newcommand{\Scpr}[2]{(\, #1 \, | \, #2\,)}

\newcommand{\jh}[1]{\widehat{\mathcal{J}}_{#1}}
\newcommand{\jc}[1]{\widecheck{\mathcal{J}}_{#1}}
\newcommand{\Sl}{\mathopen{\prec}}
\newcommand{\Sr}{\mathclose{\succ}}

\newcommand{\C}{\mathbb{C}}
\newcommand{\R}{\mathbb{R}}

\newcommand{\Z}{\mathbb{Z}}
\newcommand{\N}{\mathbb{N}}

\newcommand{\mm}{\mycal M}
\newcommand{\nn}{\mycal N}
\newcommand{\Am}{\mathcal{A}}
\newcommand{\hh}{\mathcal{H}}

\newcommand{\bb}{{\mathcal{B}}}

\newcommand{\rr}{\mathcal{R}}

\newcommand{\Dir}{{\mathcal{D}}}
\newcommand{\D}{{\mathscr{D}}}

\newcommand{\Cisc}{C^\infty_{{\rm{sc}}}}
\newcommand{\Cisco}{C^\infty_{{\rm{sc}},0}}

\newcommand{\Idmat}{\mathbbmss{1}}

\newcommand{\dif}[1]{\text{\rm{d}}#1}

\newcommand{\PP}{\mathfrak{p}}

\newcommand{\OmegaVar}{\omega_{\varepsilon}}
\newcommand{\V}{\noindent}

\newcommand{\pv}[1]{{\rm{PV}}\Bigg(#1\Bigg)}

\newcommand{\bitem}{\begin{itemize}[leftmargin=2.5em]}
\newcommand{\eitem}{\end{itemize}}

\newcommand{\GL}[2]{{\rm{GL}}_{#1}(#2)}

\renewcommand{\L}{{\mathcal{L}}}

\renewcommand{\S}{\mathscr{S}}

\DeclareMathOperator{\diag}{diag}

\DeclareMathOperator{\supp}{supp}

\newcommand{\scrM}{\mycal M}
\newcommand{\scrN}{\mycal N}

\DeclareFontFamily{OT1}{rsfso}{}
\DeclareFontShape{OT1}{rsfso}{m}{n}{ <-7> rsfso5 <7-10> rsfso7 <10-> rsfso10}{}
\DeclareMathAlphabet{\mycal}{OT1}{rsfso}{m}{n}

\DeclareFontFamily{U}{mathx}{\hyphenchar\font45}
\DeclareFontShape{U}{mathx}{m}{n}{
      <5> <6> <7> <8> <9> <10>
      <10.95> <12> <14.4> <17.28> <20.74> <24.88>
      mathx10
      }{}
\DeclareSymbolFont{mathx}{U}{mathx}{m}{n}
\DeclareFontSubstitution{U}{mathx}{m}{n}
\DeclareMathAccent{\widecheck}{0}{mathx}{"71}
\DeclareMathAccent{\wideparen}{0}{mathx}{"75}

\usetikzlibrary{
  decorations.pathmorphing,
  decorations.markings,
  arrows.meta,
  calc
}

\tikzset{
  scri/.style        = {thick, blue!70!black},
  horizon/.style     = {thick, red!80!black, dashed},
  cauchy-hor/.style  = {thick, purple!80!black, dashed},
  singularity/.style = {
    thick, red,
    decoration={snake, amplitude=2.2pt, segment length=3.5pt, pre length=2pt, post length=2pt},
    decorate
  },
  em-wave/.style     = {
    orange!60!red,
    decoration={snake, amplitude=1.5pt, segment length=1pt},
    decorate
  },
  fill-I/.style      = {fill=gray!12},
  fill-II/.style     = {fill=yellow!18},
  fill-III/.style    = {fill=blue!10},
  fill-IV/.style     = {fill=gray!12},
}

\title[Fermionic signature operator in Reissner--Nordstr\"om Geometry]{The Fermionic
    Signature Operator in the Reissner--Nordstr\"om Geometry in Horizon--Penetrating Coordinates}

\author[F.\ Finster]{Felix Finster}
\author[C.\ Krpoun]{Christoph Krpoun \\ \\ May 2026}
\address{Fakultät f\"ur Mathematik \\ Universit\"at Regensburg \\ D-93040
Regensburg \\ Germany}
\email{finster@ur.de, christoph.krpoun@mathematik.ur.de}

\begin{document}
\maketitle
\begin{abstract}
    We study the Dirac equation in the Reissner--Nordström geometry 
in horizon-penetrating coordinates up to 
    the Cauchy horizon. A mass decomposition theorem is proved,
    which gives a covariant 
    representation of the spacetime inner product that naturally involves the fermionic signature 
    operator and the fermionic flux operator. We compute their spectra and show that both are bounded 
    symmetric operators on the solution space $\hh_m$ of the massive Dirac equation. The 
    corresponding fermionic projector state is constructed and shown to satisfy the Hadamard 
    condition.
%    Furthermore, we construct a thermal $\beta$-KMS state from the fermionic signature 
%    operator for a fixed angular mode $(k,l)$, as seen by an observer at infinity. This state is 
%    likewise Hadamard. By applying a UV-cutoff argument, one may extend the thermal $\beta$-KMS 
%    state to the full set of angular modes, thereby providing a fully covariant construction of 
%    Hawking radiation.
%    
    Lastly, we give some physical interpretations of the fermionic flux operator.
\end{abstract}
\tableofcontents
\section{Introduction}
The fermionic signature operator $\S_m$ was first introduced in~\cite{finite, infinite} and is 
based on earlier perturbative constructions in~\cite{sea}. It provides a framework for spectral 
geometry in Lorentzian signature~\cite{drum} and can be used to define Hadamard states.
These states play an essential role in quantum physics because they provide the physically 
meaningful starting point for perturbative treatments of interacting quantum fields (see for example~\cite
{fewster2013necessity, brunetti+dutsch+fredenhagen, dappiaggiDirac}). The existence of Hadamard 
states was established abstractly in~\cite{fulling+narcowich+wald}, while concrete methods for 
their explicit construction were developed both via differential geometric techniques~\cite
{GeradWrochna2013, GeradWrochna2016, gérard2017hadamardstateslinearizedyangmills} and via 
holographic methods in~\cite{dappiaggi+hack+pinamonti, dappiaggi+moretti, gerard+wrochna2}.
Another method for constructing Hadamard states,
which has the advantage of being fully covariant,
is provided by the fermionic signature operator.
This method has already been successfully applied to the construction of 
quasi-free Dirac states in globally hyperbolic spacetimes~\cite{hadamard, fewster+lang, rindler} and to the 
exterior Schwarzschild geometry in~\cite{sigbh}.

The goal of the present paper is to construct the fermionic
signature operator in the Reissner-Nordström geometry
geometry with horizon-penetrating coordinates up to the
Cauchy horizon. As an application, we construct quasi-free
Dirac states and study their properties.
For the construction of the fermionic signature operator
in black hole spacetimes, the horizons cause major complications.
Namely, in globally spacetimes without horizons, the
the mass oscillation property introduced in~\cite{infinite} guarantees 
well-defined spacetime inner products, ensuring that the definition
of~$\S_m$ is mathematically meaningful. 
However, as already observed in the exterior Schwarzschild geometry
in~\cite{sigbh}, the mass oscillation property fails to hold if
a horizon is present, in simple terms due to the redshift effect.
As a way out, in~\cite{sigbh} a generalized {\em{mass decomposition}}
was introduced, which also tell us about the flux of fermions
through the horizon.

In the present work, we extend this analysis to the Reissner--Nordström geometry with 
horizon-penetrating coordinates, where the situation is further complicated by the presence of the 
Cauchy horizon. Using the integral representation of the Dirac propagator, we obtain an analogous 
mass decomposition (see Theorem~\ref{thm:massDecompTheo}).
\begin{align*}
    \braket{\PP \psi}{\PP \psi} &= \dfrac{1}{2\pi}\int_I ( \psi_m | \S_m  \phi _m)_m \:\dif{m} 
    \notag \\
    &\quad\:+\dfrac{i}{\pi} \int_I \dif{m} \, \int_I \dif{m'}\: \pv{\dfrac{1}{m-m'}} \, \mathfrak{B}
    (\psi_m, \, \phi_{m'}) \: ,
\end{align*}
where $\mathfrak{B}$ is a smooth function of $m$ and $m'$ (see also Remark~\ref{rmk:flux}), and 
PV denotes the Cauchy principal value. The scalar product $\Scpr{\cdot}{\cdot}_m$ refers to 
solutions of the Dirac equation with fixed mass $m$ (as defined in~\eqref{scalPro}). The 
decomposition holds for all families $(\psi_m)_{m\in I}$ and $(\phi_m)_{m\in I}$ in the domain 
$\hh^\infty$ (see Definition~\ref{subsetH}).

The main structural difference compared to the Schwarzschild case is that the fermionic
flux operator $\mathfrak{B}_m$ (see Definition~\ref{def:flux_op}) acquires a modified form due to 
the possibility of Dirac waves crossing the event horizon and propagating to the Cauchy horizon. 
Additionally, the spectra of both $\S_m$ and $\mathfrak{B}_m$ depend on the evolution of the wave 
inside the black hole between the Cauchy and event horizons. The explicit effects can only be 
obtained numerically, since explicit solutions are available only asymptotically (see~\cite
{krpoun-mueller} and~\cite{finster-krpoun1}).

The fermionic signature operator $\S_m$ is a symmetric and bounded operator on the Hilbert space 
$\hh_m$. Being constructed from Dirac wave functions, $\S_m$ reflects structural properties of the 
solution space of the Dirac equation. Its fundamental properties are studied in~\cite{finite, 
infinite, sigsymm}, to which we refer the interested reader. In the following, we state the main 
results of this paper.

The fermionic signature operator is a bounded, symmetric operator on $\hh^\infty_m$ with 
operator norm $\|\S_m\|_{{\rm{op}}} \leq 2$. Its construction is fully covariant. Since it respects 
the symmetries of the spacetime, it admits a joint spectral decomposition with the angular operator 
$\mathcal{A}$ (see~\eqref{matricesDirac}) and the Dirac Hamiltonian $H$ (for details, see 
Corollary~\ref{cor:spec_decomp_sig}). 
\[
    \S_m = \sum_{k,l} \int_{|\omega|>m} S^{kl}_m(\omega) \,\dif{E}_\omega\, A_{k,l} \, .
\]
Interestingly, the eigenvalues of $\S_m$ contain rich information about the black hole geometry. 
Examining their explicit form in Corollary~\ref{cor:props_s}, we see that $\S_m$ vanishes for all 
$\omega \in [-m,m]$. This naturally supports the interpretation that only solutions with sufficient 
energy to escape to infinity contribute, while low-energy solutions fall into the black hole.

Furthermore, $\S_m$ decomposes into positive and negative spectral subspaces corresponding to a 
frequency splitting for an observer at infinity, thereby naturally reproducing the Dirac sea 
picture. We also construct fermionic projector states from $\S_m$ and show that these states 
satisfy the Hadamard condition (Corollary~\ref{projHad}).

%Finally, for an observer at infinity and for a fixed angular mode, we can construct thermal 
%\Felix{Diskutiere das.}%
%$\beta$-KMS states from $\S_m$, which are likewise Hadamard (Theorem~\ref{thm:thermal_state}), 
%strongly suggesting an interpretation in terms of Hawking radiation. We emphasize that the 
%construction of these states is fully 
%independent of an observer. By applying a UV-cutoff argument to the set of possible angular modes, 
%one may extend the thermal $\beta$-KMS state to the full set of physical modes, giving the first 
%fully observer-independent description of Hawking radiation.

Additionally, the fermionic flux operator $\mathfrak{B}_m$ is a bounded symmetric operator on 
$\hh^\infty_m$ with operator norm $\|\mathfrak{B}_m\|_{{\rm{op}}} \leq 1$ (for more information, 
see Proposition~\ref{prp:fluxOp}). Since it is symmetric, it also admits a joint spectral 
decomposition together with $\mathcal{A}$ and $H$.
\[
    \mathfrak{B}_m = \sum_{k,l} \int_{|\omega|>m} \mathfrak{B}^{kl}_m(\omega) \,\dif{E}_\omega\, A_
    {k,l} \, .
\]
Interestingly, even though the full non-singular part of the mass decomposition does depend on the 
Wronskians (for the definition of the Wronskian, see~\ref{def:wronskian}) at the Cauchy and event 
horizons, the fermionic flux operator does not. This is due to the fact that the fermionic flux operator 
measures the flux through the event horizon and the entire wave propagates through the Cauchy 
horizon. Everything that enters the black hole travels to the singularity. Thus, we only have a 
causal boundary at the event horizon, which is manifest in the mass decomposition.

This paper is organized as follows. In Section~\ref{sec:prems}, we provide the theoretical 
background on the Dirac equation in globally hyperbolic spacetimes, with particular focus on the 
Reissner--Nordström geometry. In Section~\ref{sec:mass_decomp}, we establish the mass decomposition 
theorem. Section~\ref{section4} is devoted to the computation of the fermionic signature operator, 
the analysis of its spectral properties, and the construction of quasi-free states. Finally, in 
Section~\ref{section5}, we derive and analyze the fermionic flux operator and give some physical 
interpretations.
\section{Preliminaries \label{sec:prems}}
\subsection{The Dirac Equation in a Globally Hyperbolic Spacetime}
We begin with preliminaries on the Dirac equation in globally hyperbolic spacetimes, following the 
presentation in~\cite{finite, infinite}. Let $(\mm, g)$ be a four-dimensional, smooth, globally 
hyperbolic Lorentzian spin manifold. For the signature of the metric, we adopt the convention $(+,-,
-,-)$. As proven in~\cite{bernal+sanchez}, $\scrM$ admits a smooth foliation $(\scrN_\tau)_{\tau 
\in \R}$ by Cauchy hypersurfaces.

We denote the corresponding spinor bundle by $S\mm$, whose fibers $S_x\mm$ are equipped with an 
inner product of signature $(2,2)$, denoted by $\Sl \cdot | \cdot \Sr_{x}$. The smooth sections of 
the spinor bundle are written as $C^{\infty}(\mm, S\mm)$, while $C_0^{\infty}(\mm, S\mm)$ denotes 
smooth sections with compact support. These sections are also referred to as wave functions and are 
denoted by $\psi$ or $\phi$.

On the space of wave functions, we introduce the Lorentz-invariant inner product
\begin{align}\label{indefitSP}
    \langle \psi | \phi \rangle & : C^{\infty}(\mm, S\mm) \times C_0^{\infty}
    (\mm, S\mm) \longrightarrow \mathbb{C} \notag \\
                                & \langle \psi | \phi \rangle := \int_{\mm}
                                \Sl \psi | \phi \Sr_{x} \, d\mu_{\mm}.
\end{align}
We consider the Dirac equation for a given mass parameter $m \geq 0$, written as
\[ (\mathcal{D} - m )\, \psi_m = 0 \:, \]
where the Dirac operator $\mathcal{D}$ takes the form
\[ 
    \mathcal{D} = iG^{k}\partial_{k} + \bb : C^{\infty}(\mm, S\mm) \longrightarrow C^{\infty}(\mm, 
    S\mm) \:, 
\]
and $G : T_x\mm \longrightarrow L(S_x\mm)$
are the Dirac matrices satisfying the anti-commutation relations:
\[
    G(v)G(u) + G(u)G(v) = 2g(v,u)\,\Idmat_{S_x \mm}. 
\]
This map can be understood as the representation of the Clifford multiplication in terms of the 
components of the general Dirac matrices in curved spacetime.

We employ the Feynman dagger notation $\slashed{\nu} = G^j \nu_{j}$. The connection part of the 
covariant derivative is collected in the term $\bb$. The Dirac operator can equivalently be written 
as $\mathcal{D} = i G^j \nabla_j$, where $\nabla$ denotes the Levi-Civita spin connection on 
$S\mm$. For further details on the Dirac equation in curved spacetimes, we refer to~\cite{finite} 
and~\cite{lawson+michelsohn}.

Given initial data on a Cauchy surface, the Dirac equation admits unique global solutions. For 
compactly supported initial data, finite propagation speed implies that the resulting solution 
also has compact support on any other Cauchy surface. Such solutions are referred to as being \emph
{spatially compact}. We will denote the smooth, spatially compact solutions are by $\Cisc(\mm, S\mm)
$.

For such solutions with fixed mass $m$, the scalar product is defined as
\begin{align}\label{scalPro}
    ( \psi_m | \phi_m)_m = \int_{\nn} \Sl \psi_m | \nu^j G_j \, \phi_m\Sr_x \, \dif{\mu}_{\nn}(x),
\end{align}
where $\nu$ denotes the future-directed normal on $\nn$. Owing to current conservation, this scalar 
product is independent of the choice of $\scrN$ (see~\cite[Section~2]{finite} for details).
Forming the completion gives the Hilbert space $(\hh_m, ( \cdot | \cdot )_m)$.

In this paper, we restrict our attention to \emph{stationary} spacetimes, meaning that there exists 
a Killing field $K$ which is asymptotically timelike (for the general definition, see~\cite
{oneill}). We choose the foliation $(\scrN_\tau)_{\tau \in \R}$ such that the Killing field is 
given by $K = \partial_\tau$. In this setting, it is convenient to represent spacetime as a product 
$\scrM = \R \times \scrN$. The Dirac equation can then be written in the Hamiltonian form
\[
    i\partial_{\tau} \psi = H \psi, \quad \text{with} \quad H := -(G^{\tau})^{-1} \left( i \sum_
    {\alpha=1}^3 G^\alpha \partial_\alpha + \bb - m \right),
\]
where the Hamiltonian $H$ acts on $\hh$ with dense domain $\D(H) = C^\infty_0(\scrN, S\scrN)$.

\subsection{The Mass Oscillation Property}
Since we are considering a spacetime with infinite lifetime, the inner product~\eqref{indefitSP} 
may diverge for Dirac solutions. To obtain meaningful results, we have summarize the mass
oscillation formalism to this end, which has been developed in detail in~\cite{infinite, finite}.

We consider families of Dirac solutions $(\psi_m)_{m \in I}$ with a mass parameter $m$ 
varying in a bounded interval $I := (m_L, m_R)$ such that $0 \notin I$. Given Cauchy surface 
$\nn$, we regard $\psi_{\nn}(x, m)$ as an element of $S_x\mm$ and assume that $\psi_{\nn}(x,m) \in
C_0^{\infty}(\nn \times I, S\mm)$. 

In the following, we denote the mass dependence by a subscript, $\psi_m(x) := \psi(x, m)$. For any 
fixed $m$, the scalar product~\eqref{scalPro} is well defined. Extending this to families of 
solutions $\psi_m, \phi_m \in \Cisco(\mm \times I, S\mm)$, we define a scalar product by 
integrating over the mass parameter:
\[
    (\psi | \phi) := \int_I ( \psi_m | \phi_m )_m \, \dif{m}
\]
(where $\dif{m}$ is the Lebesgue measure). Taking the completion yields the Hilbert space $(\hh, 
(\cdot | \cdot))$, consisting of measurable functions for which, for almost all $m \in I$, the 
function $\psi_m(x)$ is a weak solution of the Dirac equation with $\psi_m|_{\nn} \in L^2(\nn)$. By 
construction, $\Cisco(\mm, S\mm)$ is dense in $\hh$. Moreover, the spatial integral is 
integrable over $m \in I$, ensuring that the above scalar product is well defined. The 
corresponding norm on $\hh$ is denoted by $\|\cdot\|$.

The Hilbert space $\hh$ can be understood as the direct integral of the Hilbert spaces $\hh_m$, 
written as
\[
    \hh = L^2(I, \hh_m \, ; dm) = \int_I^{\oplus} \hh_m \, \dif{m}
\]
On $\hh$, we define an operator that forms superpositions of waves oscillating with different 
frequencies:
\begin{align} \label{oscOp}
    \PP : \hh \longrightarrow \Cisc(\mm,S\mm), \quad \psi \mapsto \int_I \psi_m \, \dif{m} \:.
\end{align}
The oscillatory contributions cancel in the integral, having the
effect that the superposition of Dirac wave functions decays
for large times. This resolves the convergence problems 
of the time integral in~\eqref{indefitSP}. This decay also allows one to 
integrate the Dirac operator by parts within the spacetime inner product~\eqref{indefitSP}, yielding
\[
    \la \PP \mathcal{D} \psi | \PP\phi \ra = \la \PP\psi | \PP \mathcal{D} \phi \ra \: .
\]
We will see later (Section~\ref{sec:mass_decomp}) that this condition does not hold in the presence 
of horizons, making it necessary to generalize the setting.
\subsection{The Dirac Equation in the Reissner-Nordstr\"om Geometry\label{RSsection}}
Throughout this paper, we employ Eddington-Finkelstein coordinates $(\tau, r, \vartheta, \varphi)$ 
on the domain $\R \times (0,\infty) \times (0,\pi) \times [0,2\pi)$, as introduced in~\cite
{chandra, krpoun-mueller}. In these coordinates, the metric takes the form
\begin{align}
    g & = \frac{\Delta}{r^2}\,\dif{\tau}\otimes \dif{\tau} - \left(2 - \frac
    {\Delta}{r^2}\right)\dif{r}\otimes\dif{r}- \left(1 - \frac{\Delta}{r^2}\right)\left(\dif{\tau}
    \otimes\dif{r} + \dif{r} \otimes \dif{\tau}\right) \notag \\
      & \hspace*{4em}-r^2 \,\dif{\vartheta} \otimes \dif{\vartheta} - r^2\,\sin(\vartheta)^2
      \,\dif{\varphi} \otimes \dif{\varphi} \label{RSmetric}
\end{align}
where $\Delta \equiv \Delta(r) = r^2 - 2Mr + Q^2 = (r-r_+)(r-r_-)$ with $r_\pm = M \pm \sqrt{M^2 - 
Q^2}$ and
\[
    \dif{\tau} = \dif{t} + \dif{u} \:. 
\] 
Here, $u$ denotes the {\em{Regge-Wheeler coordinate}}, also known as the tortoise coordinate, 
defined by
\begin{align}\label{eq:defRW}
    \frac{\dif{u}}{\dif{r}} = \pm \frac{r^2}{\Delta} \:,
\end{align}
where the minus sign corresponds to the incoming and the plus sign to the outgoing null coordinates.
The zeros of the function $\Delta$ define the event horizon and the Cauchy horizon, respectively. 
The region ${r > r_-}$ outside the Cauchy horizon 
constitutes a globally hyperbolic spacetime, whereas the interior region ${r < r_-}$ is not 
globally hyperbolic due to the spacetime singularity at $r = 0$. The topology of the spacetime is 
$\mm \cong \R^2 \times S^2$.

In~\cite{krpoun-mueller}, the Dirac equation was analyzed and separated in a specific gauge, where 
the Dirac matrices are given in the Weyl representation. The Dirac wave function and the Dirac 
operator were then transformed according to
\begin{align}\label{trafo}
    \Psi = \Lambda \psi \quad \text{and} \quad \Gamma \,\Lambda( \Dir - m)\,
    \Lambda^{-1}\Psi = (\rr + \Am) \, \Psi = 0,
\end{align}
with the transformation matrices defined by
\begin{align}\label{DTrafo}
    \Lambda := \dfrac{\sqrt{r}}{r_+}\begin{bmatrix}
        \sqrt{|\Delta|} & 0   & 0   & 0              \\
        0              & r_+ & 0   & 0              \\
        0              & 0   & r_+ & 0              \\
        0              & 0   & 0   & \sqrt{|\Delta|}
    \end{bmatrix}
    \quad \text{and} \quad
    \Gamma := r
    \begin{bmatrix}
        1 & 0  & 0  & 0 \\
        0 & -1 & 0  & 0 \\
        0 & 0  & -1 & 0 \\
        0 & 0  & 0  & 1
    \end{bmatrix},
\end{align}
and the radial and angular operators $\rr$ and $\Am$ as given by the matrices
\begin{align}\label{matricesDirac}
    \rr & := \begin{bmatrix}
                 irm                 & 0                   & |\Delta|^{1/2}
                 \D_0 & 0                   \\
                 0                   & -irm                &
                 0                  & |\Delta|^{-1/2}\D_1 \\
                 |\Delta|^{-1/2}\D_1 & 0                   &
                 -irm               & 0                   \\
                 0                   & |\Delta|^{1/2} \D_0 &
                 0                  & irm                 \\
             \end{bmatrix}\quad \text{and} \\[0.3em]
    \Am & := \begin{bmatrix}
                 0     & 0    & 0     & \L_+ \\
                 0     & 0    & -\L_- & 0    \\
                 0     & \L_+ & 0     & 0    \\
                 -\L_- & 0    & 0     & 0    \\
             \end{bmatrix} .
\end{align}
Here, the linear operators $\Dir_{0/1}, \, \L_{\pm}$ are defined by
\begin{align}
    \D_0     & := - \left(\partial_{\tau} - \partial_r \right)
    \label{operators} \\
    \D_1     & := \left( 2r^2 - \Delta \right)\partial_{\tau} +
    \Delta\partial_r \\
    \L_{\pm} & := \partial_{\vartheta} + \frac{\cot(\vartheta)}{2} \mp i \csc
    (\vartheta)\partial_{\varphi} \label{Lops}
\end{align}
and the spin inner product is chosen to be:
\begin{align}\label{spinProduct}
    \Sl \psi | \phi \Sr_x = - \Big \la \psi,
        \begin{bmatrix}
            0 & \Idmat_{\C^2} \\
            \Idmat_{\C^2} & 0
        \end{bmatrix} \phi \Big \ra_{\C^4} \:.
\end{align}
The form of the spin scalar product can be understood from the requirement that the Dirac matrices 
$\gamma^j$ in Minkowski space be symmetric with respect to this inner product. By employing the 
separation ansatz
\begin{align}\label{sepAnsatz}
    \Psi = e^{-i k \varphi}\sqrt{r_+}
    \begin{pmatrix}
        X_+(\tau, r)\: Y_l(\vartheta)_+        \\
        r_+ \, X_-(\tau, r)\: Y_l(\vartheta)_- \\
        r_+ \, X_-(\tau, r)\: Y_l(\vartheta)_+ \\
        X_+(\tau ,r) \:Y_l(\vartheta)_-
    \end{pmatrix} \quad \text{with} \quad
    \omega \in \R, \, l\, \in \Z, \, k \in (\Z + 1/2)
\end{align}
and using that $\partial_\tau$ is a Killing field ($X(\tau, r) = e^{-i\omega \tau} X(\omega, r)$), 
we obtain eigenvalue problems for both the radial and angular equations. This leads to two 
decoupled ordinary differential equations governing the radial and angular components of the Dirac 
equation:
\begin{align}\label{eq:radialPDE}
    \Bigg(\Delta \begin{bmatrix}
        \partial_r & 0 \\
        0 & \partial_r
    \end{bmatrix} &-
    i \omega\begin{bmatrix}
        \left(2r^2 - \Delta\right) & 0 \\
        0 &-\Delta
    \end{bmatrix} +
    \begin{bmatrix}
        0 & S \\
        \epsilon(\Delta)\bar{S} & 0\\
    \end{bmatrix} \Bigg)
    \begin{pmatrix}
            X_+(\omega,r) \\
            r_+ X_-(\omega,r)
        \end{pmatrix} = 0
\end{align}
and
\begin{align}\label{angularPDE}
    & \left( \begin{bmatrix}
        0     & \L_- \\
        -\L_+ & 0
    \end{bmatrix} - \xi \, \Idmat_{\C^4} \right)
    \begin{pmatrix}
        Y_l(\vartheta)_+ \\
        Y_l(\vartheta)_-
    \end{pmatrix} = 0 \:,
\end{align}
where $\epsilon$ is the sign function
\begin{align}
    \epsilon (x) = \begin{cases}
        1 \quad &\text{if} \quad x > 0 \\
        0 \quad &\text{if} \quad x = 0 \\
        -1 \quad &\text{if} \quad x < 0 \\
    \end{cases}\: ,
\end{align}
$\xi$ is the separation constant, and $S = \sqrt{|\Delta|}(i m r - \xi)$. 
The solutions to the radial equation~\eqref{eq:radialPDE}, together with their asymptotic behavior, 
are discussed in detail in~\cite{krpoun-mueller}. The following lemma summarizes the result for a 
fixed mass value $m$.

\begin{Lemma}\label{le:krpMul}
    For the case $ |\omega| < m$ one solution of the radial
    ODE~\eqref{eq:radialPDE} has exponential decay and the other one exponential
    growth for $u \rightarrow \infty$.
    In the case $|\omega| > m$, on the other hand, the solutions have following
    asymptotics:
    \begin{enumerate}[leftmargin=2em]
        \item[{\rm{(i)}}] {\bf (Asymptotics at infinity)} Let $w_1 \in \C$ be
        the root of $\omega^2 - m^2$ contained in the convex hull of $\R_+ $
        and $\R_+ \cdot i$ and $w_2 = -w_1$
            the other root, and let
            \[ \Theta := \frac{1}{4}\, \ln \Big( \frac{\omega - m}{\omega + m}
            \Big) \:, \]
            then there is~$f^{kl\omega}_m := (f^{kl\omega}_{m\,+}, f^{kl\omega}_{m\,-})^T \in\C^2 
            \setminus \{ 0\} $ with
            \[
                X^\infty_m(\omega,u)
                = \begin{bmatrix}
                    \cosh(\Theta) & \sinh(\Theta) \\
                    \sinh(\Theta) & \cosh(\Theta) \\
                \end{bmatrix}	 \begin{pmatrix}
                    f^{kl\omega}_{m\,+} e^{i\Phi_+(u)}  \\
                    f^{kl\omega}_{m\,-} e^{-i\Phi_-(u)} \\
                \end{pmatrix}
                + E_{\infty}(u)  \]
            \V for the asymptotic phases
            \begin{align}
                \Phi_{\pm}(u):= w_{1}\, u + M \bigg( \pm 2\omega + \dfrac{m^2}
                {w_{1}}\bigg) \ln(u)
                \label{asyPhase}
            \end{align}
            and for an error function $E_{\infty}(u)$ with polynomial decay.
            More precisely, there is $q \in \R_+$ with
            \begin{align*}
                ||E_{\infty}|| \leqslant \dfrac{q}{u}.
            \end{align*}
        \item[{\rm{(ii)}}] {\bf (Asymptotics at the event horizon)} For every non-trivial solution 
        $X$,
            \[ X^H_m(\omega,u) =
                \begin{pmatrix}
                    h^{kl\omega}_{m\,+} e^{2i \omega u} \\
                    h^{kl\omega}_{m\,-}
                \end{pmatrix}
                + E_{+}(u)
            \]
            with~$h^{kl\omega}_m := (h^{kl\omega}_{m\,+}, h^{kl\omega}_{m\,-})^T \in \C^2 \setminus 
            \{0\} $, with $E_{+}$ such that for $r$ sufficiently close to $r_+$ and suitable 
            constants $a,b \in \R_+$,
            \begin{align*}
                ||E_{+}(u)|| \leq a e^{b u}.
            \end{align*}
        \item[{\rm{(iii)}}] {\bf (Asymptotics at the Cauchy horizon)} For every non-trivial 
        solution 
        $X$,
            \[ X^C_m(\omega,u) =
                \begin{pmatrix}
                    c^{kl\omega}_{m\,+} e^{2i \omega u} \\
                    c^{kl\omega}_{m\,-}
                \end{pmatrix}
                + E_{-}(u)
            \]
            with~$c^{kl\omega}_m := (c^{kl\omega}_{m\,+}, c^{kl\omega}_{m\,-})^T \in \C^2 \setminus 
            \{0\} $, with $E_{-}$ such that for $r$ sufficiently close to~$r_-$ and suitable 
            constants $a,b \in \R_+$,
            \begin{align*}
                ||E_{-}(u)|| \leq a e^{- b u}.
            \end{align*}
    \end{enumerate}
\end{Lemma}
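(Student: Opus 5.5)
The plan is to rewrite the radial system~\eqref{eq:radialPDE} in the Regge--Wheeler variable $u$ from~\eqref{eq:defRW}. Dividing by $\Delta$ and using $\Delta\,\partial_r = \pm r^2\partial_u$, it becomes a first-order system
\[ \partial_u X = \big(A_0 + A_1(u)\big) X , \]
with a constant matrix $A_0$ that depends on the asymptotic region and a perturbation $A_1$ that is either integrable or slowly varying there. Each of the three asymptotic statements then follows from a Levinson-type asymptotic integration theorem, applied after a suitable diagonalization. Since the lemma is quoted from~\cite{krpoun-mueller}, the argument only needs to be outlined region by region.

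\textbf{Horizons.} Near $r=r_+$ (where $u\to-\infty$) and near $r=r_-$ (where $u\to+\infty$), $\Delta$ vanishes and hence $S=\sqrt{|\Delta|}(imr-\xi)=O(\sqrt{|\Delta|})$. The coefficient $r^{-2}(2r^2-\Delta)$ tends to $2$, and $r^{-2}\Delta\to0$. The leading system is therefore $\partial_u X_+ = 2i\omega X_+$, $\partial_u X_- = 0$, whose solutions are exactly $(h_+e^{2i\omega u}, h_-)^T$. The off-diagonal terms are of size $\sqrt{|\Delta|}/r^2$, and the diagonal corrections are of size $|\Delta|$.

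Since $|\Delta|\sim e^{2\kappa_\pm u}$ in the respective limit, all error terms decay exponentially. The variation-of-constants (Volterra) integral equation then converges and yields the bounds $\|E_+\|\le a e^{bu}$ and $\|E_-\|\le a e^{-bu}$. It also shows that every nontrivial solution has $(h_+,h_-)\neq0$, respectively $(c_+,c_-)\neq0$, because the map from the limiting data to the solution is a bijection.

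The sign $\epsilon(\Delta)$ and the choice of the $\pm$ branch in~\eqref{eq:defRW} inside the black hole must be tracked. Their role is to keep the leading matrix equal to $\operatorname{diag}(2i\omega,0)$ in both regions.

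\textbf{Infinity.} As $r\to\infty$ the leading matrix is
\[ \begin{bmatrix} i\omega & -\xi+imr\cdots \end{bmatrix}\quad\text{with off-diagonal entries}\approx im , \]
and its eigenvalues are $\pm i\sqrt{\omega^2-m^2}$. For $|\omega|<m$ these are real and nonzero, which gives one exponentially growing and one exponentially decaying solution, again by Levinson. For $|\omega|>m$ they are purely imaginary. The hyperbolic rotation by $\Theta$ diagonalizes the constant part.

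The $1/r$ terms, coming from $2M/r$ in $\Delta/r^2$ and from $r\sim u-2M\ln u$, are not integrable. They produce the logarithmic phases $M(\pm2\omega + m^2/w_1)\ln u$. Handling them requires an additional transformation $X\mapsto(\Idmat + B/u)X$ that removes the off-diagonal $1/u$ part, so that the remaining error is $O(u^{-2})$. Levinson's theorem then yields $\|E_\infty\|\le q/u$.

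\textbf{Main obstacle.} The main difficulty is the infinity case: the Coulomb-type $1/u$ terms must be diagonalized correctly so that the phases~\eqref{asyPhase} come out exactly, and the residual must be verified to be integrable. By comparison, the horizon cases are routine once the exponential smallness of $\Delta$ in $u$ is used.
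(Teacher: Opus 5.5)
The paper does not prove this lemma; it quotes it from \cite{krpoun-mueller}. Your outline (Regge--Wheeler variable, Volterra/Levinson asymptotic integration, hyperbolic diagonalization at infinity) is the standard route for such statements, and two of its three parts are fine.

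The horizon part is correct. With the $+$ branch of \eqref{eq:defRW}, one has $\Delta\partial_r=r^2\partial_u$ on both sides of $r_+$, with $u\to-\infty$ at $r_+$ and $u\to+\infty$ at $r_-$. The leading matrix is $\diag(2i\omega,0)$; the factor $\epsilon(\Delta)$ only enters the subleading off-diagonal term. The perturbation is $O(|\Delta|^{1/2})$, hence exponentially small in $u$. The case $|\omega|<m$ is also fine.

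\textbf{The gap.} What fails is your claim that after the transformation $\Idmat+B/u$ the remainder is $O(u^{-2})$. Write the exterior system as $\partial_u(X_+,r_+X_-)^T=A(u)\,(X_+,r_+X_-)^T$. Its coefficients are exact functions of $r$:
\[
A=A_0+\frac{A_1}{r}+O(r^{-2}),\qquad A_0=i\begin{bmatrix}\omega&-m\\ m&-\omega\end{bmatrix},\qquad A_1=2iM\omega\,\Idmat+\begin{bmatrix}0&\xi+imM\\ \xi-imM&0\end{bmatrix}.
\]
The diagonal of $A_1$ comes from $\pm2M/r$ in the diagonal entries. The $mM$-terms come from $|\Delta|^{1/2}/r=1-M/r+\dots$ in $S/r^2$.

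The relation $r\sim u-2M\ln u$ produces no non-integrable term, contrary to your attribution. What it does produce is the obstruction. Since $u=r+2M\ln r+\mathrm{const}+O(1/r)$, one has $1/r=1/u+2M\ln u/u^2+O(u^{-2})$. So the remainder is only $O(\ln u/u^2)$. Its diagonal part is a genuine phase correction, which no near-identity transformation of size $O(1/u)$ can remove.

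\textbf{What the computation actually gives.} Conjugate with the hyperbolic rotation that diagonalizes $A_0$. The diagonal of the transformed $A_1$ is then $iM(2\omega\pm m^2/w_1)$ for the modes $e^{\pm iw_1u}$, and the $\xi$-terms drop out. This confirms the phases \eqref{asyPhase}. However, the exact solutions behave like $e^{i(w_1u+\mu_+\ln r)}$ and $e^{-i(w_1u+\mu_-\ln r)}$, times $\mathrm{const}+O(1/u)$, with $\mu_\pm=M(\pm2\omega+m^2/w_1)$.

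Now $\ln r-\ln u=-2M\ln u/u+O(1/u)$, and $\mu_\pm\neq0$ in general. Hence, with the $\ln u$-phases of \eqref{asyPhase}, asymptotic integration only gives $\|E_\infty\|\le q\,\ln u/u$, and this is sharp. The bound $q/u$ holds only if $\ln u$ is replaced by $\ln r$ in $\Phi_\pm$. So you must either state the weaker bound or change the phases; for the later use in the paper, only the decay of $E_\infty$ matters.

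\textbf{A sign to check.} You assert that the rotation by $\Theta$ diagonalizes $A_0$, but this should be verified. With \eqref{eq:radialPDE} as printed, the eigenvector of $A_0$ for $+iw_1$ is $(1,(\omega-w_1)/m)^T$. So the diagonalizing angle is $-\Theta$. For $\omega<-m$, the two exponentials are in addition attached to the opposite columns. This sign convention has to be reconciled with \cite{krpoun-mueller} rather than taken over.
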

Since the above asymptotics at infinity are given by plane waves, their coefficients can be 
interpreted as the transmission and reflection coefficients, analogous to those in classical 
scattering processes.
\begin{Def}\label{def:transmission_vec}
    The asymptotics at infinity $X^\infty_m = (X^\infty_{m\,+}, X^\infty_{m\,-})^T$ describe the 
    outgoing component $X^\infty_{m\,+}$ and the ingoing component $X^\infty_{m\,-}$. We define the 
    corresponding coefficients as the reflection coefficient $f^{kl\omega}_{m\,+}$ and the 
    transmission coefficient $f^{kl\omega}_{m\,-}$. We refer to $f^{kl\omega}_m = (f^{kl\omega}_{m\,
    +}, f^{kl\omega}_{m\,-})^T$ as the transmission vector.
\end{Def}
The angular operator in~\eqref{angularPDE} is essentially self-adjoint on $L^2(S^2, \C^2)$ with 
dense domain $C^\infty(S^2, \C^2)$ and has a purely discrete spectrum (for details, see~\cite
[Section~3]{tkerr}). More precisely, the angular operator coincides with the spin-weighted 
spherical operator for $s = \tfrac{1}{2}$, as analyzed in~\cite{goldberg}. We denote the 
corresponding orthonormal eigenvector basis by $_{\frac{1}{2}}Y_{kl} = e^{-i k \varphi} \,Y_{l}
(\vartheta)$ with $l \in \Z$ and $k \in \Z + \frac{1}{2}$, i.e.\
\begin{align}\label{angularON}
    \braket{e^{-i k\varphi}\, Y_{l}(\vartheta)}{e^{-i k'\varphi}\, Y_{l'}(\vartheta)}_{L^2(S^2)}
    = \delta_{k,k'}\:\delta_{l,l'} \:.
\end{align}
Therefore, it has a spectral decomposition of the form
\begin{align}\label{eq:specdecomp_angular}
    \mathcal{A} = \sum_{k,l} \xi_l \, A_{k, l}  \, ,
\end{align}
where $A$ is the spectral measure of the operator $\mathcal{A}$.

In~\cite{finster-krpoun1}, the integral representation of the Dirac propagator in the Reissner--Nord\-ström spacetime in Eddington-Finkelstein coordinates was derived. This derivation 
relies on the fact that the Hamiltonian $H$ is essentially self-adjoint with respect to the 
conserved scalar product on its domain of definition.
\[
    C^\infty_\text{init}(N) := \big\{ \psi_0 \in C^{\infty}_0(N \, , S M) \text
    { with }
    (\slashed{n} - i)\,(H^p\,\psi_0)\big|_{\partial N} = 0 \text{ for all~$p
    \in \N$} \big\} .
\]
Here, $N = N_\tau \big|_{\tau=0}$ denotes the initial spacelike hypersurface arising from the 
spacelike foliation $(N\tau)_{\tau \in \R}$ of $\mm$ associated with the Cauchy problem, whose 
boundary satisfies $\partial N \simeq S^2$. Moreover, $n$ denotes the inward-pointing normal on 
$\partial \mm$ (see~\cite{chernoff, finster-krpoun1} for further details).
\begin{Def}\label{conservedSP}
    The conserved scalar product on solutions of the Dirac equation has the
    form
    \begin{align}
        \Scpr{\psi}{\phi} = \int_{r_0}^{\infty} \dif{r} \int_{-1}^{1}\dif{\cos(\vartheta)} 
        \int_0^{2\pi} \dif{\varphi}\;\Psi^{\dagger}(\tau, r, \vartheta, \varphi) \,\Sigma(r)  \,\Phi
        (\tau, r, \vartheta, \varphi) ,
    \end{align}
    where the matrix $\Sigma$ is described by
    \[
        \Sigma(r)  = \dfrac{r_+}{|\Delta|}
        \begin{bmatrix}
            2r^2 - \Delta & 0        & 0        & 0             \\
            0             & |\Delta| & 0        & 0             \\
            0             & 0        & |\Delta| & 0             \\
            0             & 0        & 0        & 2r^2 - \Delta \\
        \end{bmatrix} .
    \]
\end{Def}
Using the spectral theorem for unbounded self-adjoint operators,
\begin{align}\label{genIntRep}
    \psi(\tau) = e^{-i\tau H} \psi_0 = \int_{\sigma(H)} e^{-i\omega \tau} \: \dif{E}{\omega}\, 
    \psi_0 ,
\end{align}
The main result of~\cite[Theorem~5.5]{finster-krpoun1} is an expression of the spectral measure in 
terms of two globally defined fundamental solutions of the radial ODE~\ref{eq:radialPDE}. Since 
fundamental solutions are not unique, they admit a $\GL{2}{\C}$ freedom, which can be used to 
relate different choices of fundamental solutions to one another. All physical quantities, such as 
the resolvent or the Green's matrix, remain invariant under these transformations.

A distinguished class of such solutions is given by the so-called Jost solutions, denoted by $\jc{}
$ and $\jh{}$, which were introduced in~\cite[Lemma~4.3]{finster-krpoun1}. They are derived from an 
asymptotic expansion of the radial ODE in Schrödinger-type form, complexified in the frequency 
parameter $\omega \rightarrow \omega \pm i\varepsilon$. These solutions are used to construct the 
above-mentioned global fundamental solutions (see~\cite[Lemma~4.4]{finster-krpoun1}), which are 
physical $L^2$-functions on the entire region $(r_-, \infty)$. The procedure for obtaining $L^2_
{\mathrm{loc}}$-extensions is described in detail in~\cite[Section~4.3]{finster-krpoun1} and relies 
on matching conditions at the horizons together with the convergent behavior of the complexified 
Jost-type solutions.

To this end, we define the Wronskian matrix which we also refer to as the fermionic flux matrix (the 
notation is motivated by the gamma matrices appearing in the current vector), which is derived 
in~\cite[Lemma~4.6]{finster-krpoun1}.
\begin{Def}\label{def:wronskian}
    On the real axis the Wronskian is defined by
    \begin{align}
        W(\Phi)(r) := W(\Phi_i, \Phi_j)(r) = \la \Phi_i(r), A^r\, \Phi_j(r) \ra_{\C^2} \quad \text
        {with} \quad i,\, j \in \{1,2\} \, ,
    \end{align}
    with
    \[
        A^r = 
            \begin{bmatrix}
                1 & 0 \\
                0 & - \epsilon(\Delta)\\
            \end{bmatrix} \, .
    \]
    It is independent of $r$ within the respective regions $(r_-, r_+)$ and $(r_+, r_\infty)$.
\end{Def}
For the result of Theorem~5.5 in~\cite{finster-krpoun1}, we used the freedom in the fundamental 
solutions and chose a specific basis in order to simplify the computations. We now describe a 
procedure for choosing the initial conditions that determine the set of global fundamental 
solutions $\{\Phi_1, \Phi_2\}$ uniquely. The boundary conditions for the solution~$\Phi_1$ are set via its 
asymptotics at infinity, given by 
\[
    \lim_{r\nearrow \infty}\Phi_1(r) = \lim_{r\nearrow \infty} \jh{\infty}(\OmegaVar, r)\, ,
\]
while $\Phi_2$ satisfies initial conditions from a linear combination of the asymptotic solutions 
at the Cauchy horizon of the form
\begin{align}\label{eq:bc_phi2}
    \lim_{r\searrow r_-} \Phi_2(r) =  \lim_{r\searrow r_-}\Big(\jc{-}(\OmegaVar, r) + c(\OmegaVar) 
    \jh{+}(\OmegaVar, r) \Big) \,.
\end{align}   
This can be derived by taking the difference of the general fundamental solutions and evaluating 
them at the Cauchy horizon where $r=r_-$. In the end, one performs
the limit~$\varepsilon \searrow 0$. 
The $\omega$-dependence of $c(\omega)$ is determined by the matching conditions arising in the 
construction of the global fundamental solutions.

In addition, we express $\Phi_2$ in terms of the Jost solutions at infinity, which are then 
extended to the entire interval $(r_-, \infty)$:
\[
    \Phi_2 = a \,\jh{\infty} + b\, \jc{\infty} \, .
\]
Their limits on the real line are given by
\begin{align}\label{eq:constr_fund}
    \chi_{1,m}(\omega, r) = \lim_{\varepsilon \searrow 0} \jh{\infty}(\OmegaVar, r) 
    \quad \text{and} \quad 
    \chi_{2,m}(\omega, r) = \lim_{\varepsilon \searrow 0} \jc{\infty}(\OmegaVar, r) \, .
\end{align}
These $\chi_{i,m}$ are used to compute the integral representation. More precisely, they arise as 
the combination of the radial solution and the spin-weighted spherical harmonics, as we will 
explain later. We want to highlight that, since $\Phi_2$ is constructed at the Cauchy horizon, the 
resulting coefficients $a$ and $b$ are determined by the global behavior of the fundamental 
solutions up to the Cauchy horizon, which is an essential difference from~\cite{sigbh}.

The coefficients $a$ and $b$ are pseudo-normalized, as can be seen by evaluating the Wronskian (see 
Definition~\ref{def:wronskian}) with signature $(1,-1)$ at infinity and computing the $(2,2)
$-component of the corresponding Gram matrix. Noticing that $\chi_{i,m}$ are defined by the 
normalized Jost functions at infinity, the Gram matrix of the Wronskian is constant $\diag(1,-1)$ 
on $(r_+, \infty)$ for any linear combination. Consequently, they can be parametrized in the 
general form
\begin{align}\label{eq:def_ab}
    a = e^{i \alpha} \sinh(\vartheta) \quad \text{and} \quad b = e^{i \beta} \cosh(\vartheta) \, ,
\end{align}
where $\alpha, \beta, \vartheta \in \R$. At this point, we emphasize that $\vartheta$ is a function 
of the energy parameter $\omega$ as well as of the angular modes $(l,k)$. Adhering to the 
conventions introduced above, the parameters $a$ and $b$ represent the reflection and transmission 
coefficients of the fundamental solution $\Phi_2$. Using Definition~\ref{def:transmission_vec}, we 
combine them into the transmission vector
\begin{align}\label{eq:def_transvec}
    f^{kl\omega}_{2, m} =
        \begin{pmatrix}
            e^{i \alpha} \sinh(\vartheta) \\
            e^{i \beta} \cosh(\vartheta)
        \end{pmatrix} \, ,
\end{align}
which has $\C^2$-norm $\|f^{kl\omega}_{2, m}\|_{\C^2} = \cosh(2 \vartheta)$. Consequently, we can 
express $\vartheta$ as a function of the transmission vector $f^{kl\omega}_{2, m}$,
\begin{align}\label{eq:vartheta}
    \vartheta(\omega, l, k) = \dfrac{1}{2} {\rm{arccosh}}(\|f^{kl\omega}_{2,m}\|_{\C^2}) \, .
\end{align}
We summarize the results of this section in the following way. A general fundamental solution of 
the Dirac has the form
\[
    \Psi_{i,m}^{kl}(\omega, r,\, \vartheta,\varphi) := \chi_{i,m}(\omega, r)_{\frac{1}{2}}Y^{kl}
    (\vartheta,\varphi) \,,
\]
with $k \in \big(\Z + \frac{1}{2}\big), \: l\in \Z \text{ and } p \in \{1,2\}$ and whereas $_{\frac
{1}{2}}Y^{kl}(\vartheta,\varphi)$ are the orthonormal basis vectors of the spin$-1/2$ weighted 
spherical harmonics~\eqref{angularON} and $\chi_{p,m}(\omega,r)$ denotes the constructed radial 
fundamental solutions~\eqref{eq:constr_fund} which have, via construction, transmission vectors of 
the from 
\[
    f^{kl\omega}_{m, 1} = \begin{pmatrix}
        1 \\
        0
    \end{pmatrix} \quad \text{and} \quad 
    f^{kl\omega}_{m, 2} = \begin{pmatrix}
        0 \\
        1
    \end{pmatrix} \, .
\] 
Additionally, for $|\omega| < m$, at asymptotic infinity we set the exponential decaying solution 
and its  extension for $\chi_{m,1}$, whereas the solution for $\chi_{m,2}$ increases exponentially.

Next, we state the main result from~\cite{finster-krpoun1} for a general fundamental solution $\Psi_
{p,m}^{kl}$ and express the coefficient matrix $T$ in terms of the components of the 
transmission vector~\eqref{eq:def_transvec} defined above. Also, we added a mass index to indicate 
that this is a representation for a fixed mass $m$.
\begin{Lemma}\label{le:integralRep}
    The Dirac propagator in the Reissner-Nordström geometry with \\
    Eddington-Finkelstein coordinates can be expressed in terms of the general fundamental solutions 
    $\Psi^{kl}_{i,m}(\omega, r, \vartheta, \varphi)$ with $i \in \{1,2\}$ in $r\, \in (r_-, \infty)$ as
    \begin{align}
        \psi_m(\tau, r, \vartheta, \varphi) = \sum_{k,l} \int_{\R \setminus \{\pm m\} } e^{-i\omega \tau} \, \sum_{i=1}
        ^2 \,\widehat{\psi}^{kl}_{i,m}(\omega) \,\Psi^{kl}_{i,m}(\omega,r,\vartheta,\varphi)\:\dif
        {\omega}
    \end{align}
    with $\widehat{\psi}^{kl}_{i,m}(\omega) : \C \rightarrow \C$ smooth functions defined by
    \[
        \widehat{\psi}^{kl}_{i,m}(\omega) = \dfrac{1}{2\pi}\sum_{j=1}^2 t_{ij}(\omega, l)\, (\Psi^{kl}_{j,m}
        (\omega)\,|\, \psi_{m,0})\:.
    \]
    Here~$(\cdot| \cdot )$ denotes the conserved scalar product on the hypersurfaces defined in 
    definition~\ref{conservedSP}, restricted to the upper left $2 \times 2$ block. Moreover, $t_{ij}
    $ are the entries of the matrix $T$ and~$\psi_{m,0}(r) \, \in C^\infty_\text{\rm{init}}(N)$.
    Depending on $\omega$ and $m$, the matrix~$T$ has the entries
    \begin{align*}
        T = \dfrac{1}{2} 
        \begin{cases}
            \begin{pmatrix}
                q & 0 \\
                0 & 0 
            \end{pmatrix} \text{ with } q \in \R  &\text{ if~$|\omega|<m$} \\[2em]
            \begin{pmatrix} 
                1 & t \\
                t & 1 
            \end{pmatrix},  &\text{ if~$|\omega|>m$}\:. 
        \end{cases}
    \end{align*}
    with $ t = e^{-i(\alpha - \beta)} \tanh(\vartheta)$, where $\alpha, \, \beta$ and $\vartheta$ real 
    functions of $\omega, \, k$ and $l$.
\end{Lemma}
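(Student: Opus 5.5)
Our plan is to take the integral representation of~\cite[Theorem~5.5]{finster-krpoun1} as given. That result is written for a particular pair of global fundamental solutions. We then transport it to the pair $\{\chi_{1,m},\chi_{2,m}\}$ fixed by~\eqref{eq:constr_fund} using the $\GL{2}{\C}$ freedom. The point is that the spectral measure in~\eqref{genIntRep} can be written in the form $\dif{E}_\omega = \frac{1}{2\pi}\sum_{i,j} t_{ij}\,\Psi_i\,(\Psi_j|\cdot)\,\dif{\omega}$. If the fundamental solutions change by $\Psi \mapsto \Psi G$ with $G\in\GL{2}{\C}$, then the coefficient matrix changes by $T\mapsto G^{-1}T(G^{-1})^\dagger$, and the operator stays the same. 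So everything reduces to determining $T$ in the chosen basis. After the angular separation, the sum over $(k,l)$ and the orthonormality~\eqref{angularON} let us do this for a single mode. Along the way we check that the $\varepsilon\searrow 0$ limits of the Jost solutions exist off $\omega=\pm m$, which is why those two points are excluded from the integral.

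\textbf{Case $|\omega|>m$.} Recall that $T$ is built from the jump of the resolvent across the real axis, that is, Stone's formula, $\lim_{\varepsilon\searrow0}\big(R_{\omega+i\varepsilon}-R_{\omega-i\varepsilon}\big)$. The Green's matrix is $\Phi_1(r)\Phi_2(r')^\dagger/W(\Phi_1,\Phi_2)$ for $r<r'$, and symmetrically for $r>r'$. Writing $\Phi_1=\chi_1$ and $\Phi_2=a\chi_1+b\chi_2$, the jump becomes a Hermitian combination of $\chi_i\chi_j^\dagger$. The Wronskian normalisation is used here: the Gram matrix is $\diag(1,-1)$ on $(r_+,\infty)$, and with~\eqref{eq:def_ab} the relevant Wronskian entries are given in terms of $a,b$. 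Computing the jump gives diagonal entries $\frac12$ and off-diagonal entries $\frac12\,a\bar b/|b|^2$. Using~\eqref{eq:def_ab}, this equals $\frac12 e^{i(\alpha-\beta)}\tanh\vartheta$, or its conjugate depending on convention. This matches $t=e^{-i(\alpha-\beta)}\tanh(\vartheta)$, and $T$ is Hermitian because $t$ sits symmetrically. One fine point: the matrix is written as $\begin{pmatrix}1&t\\t&1\end{pmatrix}$. This has to be read with the convention fixed by how $(\Psi_j|\psi_0)$ conjugates, and we would check that consistency explicitly.

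\textbf{Case $|\omega|<m$.} By Lemma~\ref{le:krpMul}, $\chi_{1,m}$ decays exponentially at infinity and $\chi_{2,m}$ grows. The growing solution cannot contribute to the spectral measure, since it is not $L^2$ near infinity. The jump of the resolvent is therefore proportional to $\chi_1\chi_1^\dagger$, with a real proportionality factor $q$ because the spectral measure is positive. This gives $T=\frac12\diag(q,0)$.

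\textbf{Main obstacle.} The hardest step is the $|\omega|>m$ computation of the off-diagonal entry, because of the Cauchy horizon. The coefficient $c(\omega)$ in the boundary condition~\eqref{eq:bc_phi2} of $\Phi_2$ enters $a$ and $b$ through the matching across $r_+$. We must show that it enters only through $\alpha$, $\beta$ and $\vartheta$. To do this we first use that the Wronskian is constant on each region (Definition~\ref{def:wronskian}), so all pairings can be evaluated at infinity. There the plane-wave asymptotics of Lemma~\ref{le:krpMul}(i) make the pairings explicit, and the $\varepsilon\searrow0$ limit can be taken after simplification.
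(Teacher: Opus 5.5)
Your proposal is correct and follows the paper's own argument. The paper's proof also simply invokes Theorem~5.5 of~\cite{finster-krpoun1}, separates the angular part via the completeness relation $\sum_{k,l}{}_{\frac{1}{2}}Y_{kl}\otimes{}_{\frac{1}{2}}Y_{kl}^*=\Idmat_{L^2(S^2)}$, and obtains $T$ by a short computation from $\Phi_2=a\chi_{1,m}+b\chi_{2,m}$ with $a,b$ as in~\eqref{eq:def_ab}; this computation is exactly your change-of-basis rule $T\mapsto G^{-1}T(G^{-1})^\dagger$. The paper additionally absorbs one factor $\frac{1}{2}$ into $T$ and removes the factor $1/r_+$ by the rescaling $\sqrt{r_+}\,X\to X$ of the separation ansatz, and your ``main obstacle'' is not a real one, since any pair with $|b|^2-|a|^2=1$ is automatically of the form~\eqref{eq:def_ab}.
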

\begin{proof}
    This follows directly from~\cite[Theorem~5.5]{finster-krpoun1} and the identity relation for 
    the spin-weighted spherical harmonics
    \[
        \sum_{k,l} {}_{\frac{1}{2} }Y_{kl} \otimes {}_{\frac{1}{2}}Y_{kl}^*= \Idmat_{L^2(S^2)} \, ,
    \]
    with $Y^*$ the adjoint with respect to the $L^2(S^2)$ product. The sum converges because it is 
    a decomposition into the orthogonal eigenspace of $\mathcal{A}$.
    
    The entries of $T$ follow from a short computation using the expressions for $a$ and $b$ 
    from the above discussion defined in equation~\eqref{eq:def_ab}. Additionally, we absorbed one 
    factor $1/2$ into the matrix $T$, whereas the factor $1/r_+$ is canceled by the prefactors 
    arising from the ansatz for the radial solution in~\ref{sepAnsatz}, upon imposing the 
    transformation $\sqrt{r_+}X(\tau,r) \rightarrow X(\tau,r)$.
\end{proof}

Additionally, we choose a dense subset $\hh^{\infty} \subset \hh$, spanned by families of solutions 
to the Dirac equation. This choice ensures suitable analytical properties for handling technical 
aspects such as differentiability and integrability.
\begin{Def}
    \label{subsetH}
    The domain $\hh^{\infty} \subset \hh$ is chosen as the space of all Dirac solutions of the form
    \eqref{le:integralRep} which satisfy the following conditions:\vspace*{0.5em}
    \begin{enumerate}
        \item The functions $\widehat{\psi}_{i,m}^{kl}(\omega)$ vanish identically for almost all 
            $k \in (\Z + 1/2)$ and $l \in \Z$. \\
        \item For all $k \in (\Z + 1/2)$, $l \in \Z$ and $i \in \{1,2\}$, the functions $\widehat
            {\psi}_{i, m}^{kl}(\omega)$ are smooth and compactly supported in $\omega$ and $m$. 
            Moreover, they are supported away from $\omega = \pm m$, i.e.
            \begin{align}
                \supp{\widehat{\psi}_{i, \cdot}^{kl}(\cdot)} \subset \big\{(\omega, \, m) \, \in \, 
                \R \times I \quad \text{with} \quad \omega \neq \pm m\big\}
            \end{align}
    \end{enumerate}
\end{Def}
This definition leads to two main results. First, it ensures that the Dirac solutions in 
$\hh^\infty_m$ are well defined. Second, it excludes solutions that asymptotically exhibit 
vanishing momentum at infinity. This exclusion is justified by the fact that $\hh^\infty$ is a 
dense subset. The denseness follows from the absence of a point spectrum in the Dirac Hamiltonian, 
as shown in~\cite{finster-krpoun1}, and is further supported by the integral representation of the 
Hamiltonian.

\section{The Mass Decomposition Theorem in Reissner-Nordström}\label{sec:mass_decomp}
\subsection{Preparations}
In the context of the Reissner-Nordström geometry, the results of~\cite{finite, infinite} do not 
apply. The main reason is that waves can cross the event horizon of the black hole and disappear from the observable region.
On a more technical level, the mass oscillations are not present
on the event horizon due to the infinite red shift effect.
Another complication in our setting is that
two horizons are present.

Nevertheless, having derived the integral representation of the Dirac propagator, this 
representation can be used to formulate a so-called mass decomposition. This approach was studied 
in detail in~\cite[Section~3]{sigbh} for the exterior Schwarzschild geometry. The following steps 
extend that result to the Reissner-Nordström geometry up to the Cauchy horizon.

When evaluating the inner product, we can exploit the orthogonality of the angular eigenfunctions 
to simplify the expression. Using~\eqref{indefitSP}, \eqref{spinProduct}, and~\eqref
{le:integralRep}, we obtain for all $\psi, \phi \in \hh^\infty$
\begin{align}\label{eq:simplInnerProduct}
    \braket{\PP \psi}{\PP \phi} &= -2 \int_{-\infty}^\infty \dif{t}
    \int_{r_-}^\infty \dif{r} \dfrac{r}{\sqrt{|\Delta|}} \int_I \dif{m} \int_I
    \dif{m'} \int_{-\infty}^\infty \dif{\omega} \int_{-\infty}^\infty \dif
    {\omega'} \, e^{i(\omega\, -\, \omega')t}\notag \\
    & \hspace{6em} \times \int_{0}^{2\pi}\dif{\varphi} \int_{-1}^{1} \dif
    {\cos(\vartheta)} \sum_{k,l,k',l'} \sum_{i, j} \overline{\hat{\psi}^{kl}(\omega)}
    _{i, m} \, \hat{\phi}^{kl}(\omega')_{j, m'} \notag \\
    & \hspace{10em}\times \braket{_{\frac{1}{2}}Y^{kl} \chi_{i,m}(\omega, r)
    (\vartheta,\varphi)}{A^0\, _{\frac{1}{2}}Y^{k'l'} \chi_{j,m'}(\omega', r)
    (\vartheta,\varphi)}_{\C^2} \notag \\
    & = - 2 \sum_{k,l} \int_{r_-}^\infty \dif{r} \dfrac{r}{\sqrt{|\Delta|}}
    \int_I \dif{m} \int_I \dif{m'} \int_{-\infty}^\infty \dif{\omega}
    \notag \\
    & \hspace{5em} \times \sum_{i, j} \overline{\hat{\psi}^{kl}(\omega)}_{i, m} \, \hat{\phi}^{kl}
    (\omega)_{j, m'} \braket{\chi_{i, m}(\omega, r)}{A^0\chi_{j, m'}(\omega, r)}_{\C^2}
\end{align}
where we get a factor 2 by expressing the four-spinor $X$ as two two-spinors and defined the matrix
\[
    A^0 := \begin{bmatrix}
        0 & 1 \\
        1 & 0 \\
    \end{bmatrix}\, ,
\] for a short notation. Additionally, we can express the scalar product in an alternative way
\begin{Lemma}
    \label{scalLemma}
    For any Dirac solutions $\phi_m, \psi_m \in \hh^\infty$, the scalar product~\eqref{scalPro} can 
    be written in the alternative form
    \begin{align}
        \Scpr{\psi_m}{\phi_m}_m = 2\pi \sum_{k,l} \int_{\R} \sum_{i,j = 1}^2 (T^{-1})^{ij} \, 
        \overline{\widehat{\psi}^{kl}_{i,m}(\omega)} \, \widehat{\phi}^{kl}_{j,m}(\omega) \, \dif
        {\omega}\: ,
    \end{align}
where $T$ is the matrix from Lemma~\ref{le:integralRep}.
\end{Lemma}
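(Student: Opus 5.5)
The approach is to use the integral representation of Lemma~\ref{le:integralRep} together with the spectral theorem for $H$. Since the scalar product is conserved, I would evaluate it at $\tau=0$ and write $\phi_m = \int e^{-i\omega\tau}\,\dif{E}_\omega\,\phi_{m,0}$. Then
\[
\Scpr{\psi_m}{\phi_m}_m = \int_{\sigma(H)} \Scpr{\psi_{m,0}}{\dif{E}_\omega \phi_{m,0}}.
\]
Lemma~\ref{le:integralRep} identifies the spectral measure, mode by mode in $(k,l)$, as
\[
\dif{E}_\omega = \frac{1}{2\pi}\sum_{k,l}\sum_{i,j} t_{ij}(\omega,l)\, \big|\Psi^{kl}_{i,m}(\omega)\big)\big(\Psi^{kl}_{j,m}(\omega)\big|\,\dif{\omega}.
\]
The angular orthonormality~\eqref{angularON} kills the cross terms between different $(k,l)$. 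This gives
\[
\Scpr{\psi_m}{\phi_m}_m = \frac{1}{2\pi}\sum_{k,l}\int \sum_{i,j} t_{ij}\,\overline{g^{kl}_i(\omega)}\, h^{kl}_j(\omega)\,\dif{\omega},
\]
where $g^{kl}_i := (\Psi^{kl}_{i,m}|\psi_{m,0})$ and $h^{kl}_j := (\Psi^{kl}_{j,m}|\phi_{m,0})$. Here I use that $T$ is Hermitian: for $|\omega|>m$ the off-diagonal entries are $t$ and $\bar t$, so the matrix is Hermitian once one is read with the conjugate. I would check this against~\cite{finster-krpoun1}.

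Next I invert the coefficient relation. The formula in Lemma~\ref{le:integralRep} reads $\widehat\psi^{kl}_m = \frac{1}{2\pi}T g^{kl}$ in vector notation, so $g^{kl} = 2\pi T^{-1}\widehat\psi^{kl}_m$, and likewise for $h^{kl}$. Substituting gives
\[
\frac{1}{2\pi}\,(2\pi)^2\,\big\langle T^{-1}\widehat\psi,\, T\,T^{-1}\widehat\phi\big\rangle = 2\pi\,\big\langle \widehat\psi,\, T^{-1}\widehat\phi\big\rangle,
\]
again using hermiticity of $T^{-1}$. Summing over $(k,l)$ and integrating in $\omega$ then yields exactly the claimed formula. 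Note that for $|\omega|>m$ we have $\det(2T) = 1-|t|^2 = 1-\tanh^2\vartheta > 0$, so $T$ is invertible there.

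The main obstacle is the regime $|\omega|<m$, where $T$ is singular. There I would argue that only the $\chi_{1,m}$ channel (the decaying solution) is present: $\widehat\psi^{kl}_{2,m}=0$, and the $\chi_{2,m}$ component does not belong to the spectral subspace. So $T^{-1}$ is to be read as the pseudo-inverse, namely $2/q$ on the first entry. Alternatively, one can use that elements of $\hh^\infty$ may be taken supported in $|\omega|>m$ up to this one-dimensional channel.

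Finally, the interchanges of the sum over modes, the $\omega$-integral and the $r$-integral are justified because elements of $\hh^\infty$ have finitely many modes and coefficients that are smooth and compactly supported away from $\pm m$ (Definition~\ref{subsetH}). This makes all integrals absolutely convergent, and the $\omega\neq\pm m$ exclusion has measure zero.
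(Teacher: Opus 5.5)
Your argument is correct and follows the same route as the paper: conservation of $\Scpr{\cdot}{\cdot}_m$ reduces everything to $\tau=0$, the integral representation of Lemma~\ref{le:integralRep} is inserted, and the coefficient relation $\widehat\psi^{kl}_{i,m}=\frac{1}{2\pi}\sum_j t_{ij}\Scpr{\Psi^{kl}_{j,m}}{\psi_{m,0}}$ is inverted with $T^{-1}$. The only difference is bookkeeping. The paper expands just $\psi_m|_{\tau=0}=\sum_{k,l}\int\sum_i\widehat\psi^{kl}_{i,m}\,\Psi^{kl}_{i,m}\,\dif{\omega}$, pairs it with $\phi_m$, and substitutes $\Scpr{\Psi^{kl}_{i,m}}{\phi_{m,0}}=2\pi\,(T^{-1}\widehat\phi^{kl}_m)_i$ once, so it never needs hermiticity of $T$. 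Hermiticity does hold anyway: it is forced by self-adjointness of $\dif{E}_\omega$ and matches the form of $T$ in Corollary~\ref{cor:props_s}. Your pseudo-inverse reading of $T^{-1}$ on $|\omega|<m$ makes explicit a point the paper silently passes over.
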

\begin{proof}
    By looking at the integral representation for $t=0$ we end up with
    \begin{align*}
        \psi_m\big|_{t=0} = \dfrac{1}{2\pi} \sum_{k,l} \int_{\R} \, \sum_{i,j=1}^2 \, t_{ij} \, 
        (\Psi^{kl}(\omega)_{j, m}\,|\, \psi_m)\big|_{\tau = 0} \,\Psi^{kl}(\omega)_{i, m}(r,\, 
        \vartheta,\, \varphi)\:\dif{\omega}
    \end{align*}
    Since the scalar product is conserved in time, the scalar product of two solutions at any 
    arbitrary time $\tau$ equals that at $\tau = 0$, i.e.,
    $\Scpr{\psi_m}{\phi_m}_m = \Scpr{\psi_m}{\phi_m}_m\big|_{\tau=0}$. Taking the scalar product 
    with another solution $\phi_m$ and applying Lemma~\ref{le:integralRep} yields the result.
\end{proof}
When looking at the scalar product~\ref{eq:simplInnerProduct}, we observe that only pairings 
between solutions with the same angular momentum contribute. This is also the case for the inner 
product from the above Lemma~\ref{scalLemma}. Therefore, we restrict our subsequent computations to 
a single wave mode $(k, l)$. For readability, we consistently omit the corresponding sums and 
indices. We now proceed with a short lemma.
\begin{Lemma}
    \label{decay}
    For all $\psi \in \hh^\infty$ and $\forall \, \omega \in \, \R$ we have
    following radial decay
    \begin{align}
        \int_I \dif{m} \, \sum_{i=1}^2 \, \widehat{\psi}_{i,m}(\omega) \chi_{i,m}(\omega, r) = 
        \mathcal{O}\bigg(\frac{1}{r}\bigg) .
    \end{align}
\end{Lemma}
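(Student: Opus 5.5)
The plan is to use the asymptotics at infinity from Lemma~\ref{le:krpMul}(i) and to gain the factor $1/r$ from the oscillation in the mass parameter, via integration by parts in $m$.

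Fix $\omega$ and a mode $(k,l)$. By Definition~\ref{subsetH}, the integrand $m \mapsto \widehat{\psi}_{i,m}(\omega)$ is smooth, compactly supported in $I$, and vanishes near $m=\pm\omega$. So the $m$-integral splits into a region with $m<|\omega|$ (oscillatory) and a region with $m>|\omega|$ (evanescent). Each region is bounded away from the threshold $m=|\omega|$.

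\textbf{Evanescent region $|\omega|<m$.} By the normalization stated before Lemma~\ref{le:integralRep}, $\chi_{1,m}$ is the exponentially decaying solution, while $T$ only has the $(1,1)$ entry. Hence only $\widehat\psi_{1,m}$ contributes, since $\widehat\psi_{2,m}=\frac{1}{2\pi}\sum_j t_{2j}(\ldots)=0$. The decay rate $\sqrt{m^2-\omega^2}$ is bounded below on the support. Using the same asymptotic analysis, with the error bound uniform on compact subsets away from $m=|\omega|$, we get $|\chi_{1,m}(\omega,r)|\le C e^{-cu}$. Since $u\sim r$ as $r\to\infty$, this part is $\mathcal{O}(e^{-cr})\subset\mathcal{O}(1/r)$.

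\textbf{Oscillatory region $|\omega|>m$.} Insert Lemma~\ref{le:krpMul}(i) for $\chi_{1,m}$ and $\chi_{2,m}$, whose transmission vectors are $(1,0)^T$ and $(0,1)^T$:
\[
\chi_{i,m}(\omega,u)=\begin{bmatrix}\cosh\Theta & \sinh\Theta\\ \sinh\Theta&\cosh\Theta\end{bmatrix} v_i(u)+E_{\infty}(u),
\]
where $v_i$ has the single nonzero entry $e^{\pm i\Phi_\pm(u)}$ and $\|E_\infty\|\le q/u$.

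\emph{Error term.} For the error term to integrate to $\mathcal{O}(1/u)$, the constant $q$ must be uniform in $m$ over the compact support. I would verify this by inspecting the construction in~\cite{krpoun-mueller}: the constant there depends continuously on $(\omega,m)$ away from $|\omega|=m$.

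\emph{Main term.} The main term is a sum of integrals of the form
\[
\int_I g(m)\,e^{\pm i\Phi_\pm(u;m)}\,\dif{m},
\]
where $g(m)=\widehat\psi_{i,m}(\omega)\cosh\Theta(m)$ (or with $\sinh\Theta$) is smooth and compactly supported in $m$. Here I use that $\Theta$ is smooth away from $m=|\omega|$. The phase derivative is
\[
\partial_m\Phi_\pm=\partial_m w_1\,u+\partial_m\!\Big(M(\pm2\omega+m^2/w_1)\Big)\ln u,
\]
with $\partial_m w_1=-m/w_1$. Since $0\notin I$ and $w_1$ is bounded away from zero, $|\partial_m\Phi_\pm|\ge c\,u$ for $u$ large. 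Integrate by parts once, writing $e^{i\Phi}=\frac{1}{i\partial_m\Phi}\partial_m e^{i\Phi}$; there are no boundary terms because $g$ has compact support. This gives
\[
\int g\,e^{i\Phi}\,\dif{m}=-\int \partial_m\!\Big(\frac{g}{i\partial_m\Phi}\Big)e^{i\Phi}\,\dif{m}=\mathcal{O}(1/u),
\]
because $\partial_m^2\Phi=\mathcal{O}(u)$ so that $\partial_m(1/\partial_m\Phi)=\mathcal{O}(1/u)$. Finally, $u\sim r$ as $r\to\infty$ by \eqref{eq:defRW}, so $\mathcal{O}(1/u)=\mathcal{O}(1/r)$.

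\textbf{Main obstacle.} The main obstacle is the uniformity in $m$ of the error bounds. This concerns the error $E_\infty$ in the oscillatory region and the exponential bound in the evanescent region. I would handle both by appealing to the continuous dependence on parameters of the Jost-type solutions in~\cite{finster-krpoun1, krpoun-mueller} on compact sets avoiding $m=\pm\omega$, which is exactly what Definition~\ref{subsetH} provides.
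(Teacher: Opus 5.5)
Your proposal is correct and follows the same route as the paper. You split according to $|\omega|<m$, where only the exponentially decaying $\chi_{1,m}$ survives because $T$ kills the second component, versus $|\omega|>m$, where there is a polynomially decaying error plus an oscillatory main term. Your explicit integration by parts in $m$, using $|\partial_m\Phi_\pm|\geq c\,u$, is the quantitative version of the Riemann--Lebesgue step the paper invokes, and the paper also assumes implicitly the uniformity in $m$ of the error constants that you flag.
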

\begin{proof}
    By Definition~\ref{subsetH}, the functions $\widehat{\psi}_{i,m}(\omega)$ and, from Lemma~\ref
    {le:krpMul}, the asymptotic phases are smooth with respect to $m$. In the case $|\omega| < m$, 
    only the decaying solution $\chi_{1,m}$ remains because $T = {\rm{diag}(f,0)}$ and the 
    growing solution is suppressed, ensuring the required behavior for large $r$. 
    
    For $|\omega| > m$, $\chi_{i,m}$ corresponds to the Jost solutions at asymptotic infinity. As 
    one can verify in~\cite[Lemma~4.3]{finster-krpoun1}, the error term can be neglected because it 
    decays polynomially in $r$. Since the remaining term is smooth, the Riemann--Lebesgue lemma 
    applies, yielding a rapidly decaying contribution as $r \to \infty$.
\end{proof}
Next, we examine the behavior of the mass multiplication operator, which will later be used to 
determine the regular part of the mass integrals.
\begin{Def}\label{massMulOp}
    We define the mass multiplication operator $\mathcal{T}_m: \hh^\infty \rightarrow
        \hh^\infty$ as
    \[
        \mathcal{T}_m\,\psi_m:= m \, \psi_m .
    \]
\end{Def}
We now proceed to compute the boundary term, which breaks the symmetry of the multiplication 
operator with respect to the Lorentz-invariant inner product~\ref{indefitSP}. In contrast to the 
exterior case (see~\cite{sigbh}), we obtain additional contributions from the event and Cauchy 
horizons.
\begin{Lemma}
    \label{le:non_singular}
    For all $\psi, \phi \in \mathcal{H}^\infty$, we have that $\mathcal{T}_m$ is not symmetric with 
    respect to the Lorentz invariant inner product~\ref{indefitSP} and differs by
    \begin{align}
        \braket{\PP\psi}{\PP \mathcal{T}_m \phi} - \braket{\PP \mathcal{T}_m \psi}{\PP \phi} = 2
        i  \int_I \dif{m} \, &\int_I \dif{m'}
        \int_{-\infty}^\infty \dif{\omega} \notag \\
        & \times \sum_{i,j = 1}^2 \overline{\widehat{\psi}_{i,m}(\omega)} \, \widehat{\phi}_{j,m'}(\omega)\, \Omega(\omega,
        i, j)_{ij}\: .
    \end{align}
    Here, $\Omega(\omega,m,m') \in {\rm{Mat}}(2,\C)$ describes a complex matrix (boundary matrix) 
    with entries
    \[
        \Omega(\omega,m,m')_{ij} = W(\chi^\omega_{i,m},\chi^\omega_{j,m'})(r^>_+) - W(\chi^\omega_
        {i,m},\chi^\omega_{j,m'})(r^<_+) - W(\chi^\omega_{i,m},\chi^\omega_{j,m'})(r_-)
    \] 
    where $r^>_+$ indicates that the Wronskians lies inside $(r_+, \infty)$ and $r^<_+$ inside $
    (r_-, r_+)$.
\end{Lemma}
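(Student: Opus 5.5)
The approach is the one used for the exterior case in~\cite{sigbh}: write the mass factor $m$ (resp.\ $m'$) as the eigenvalue of the Dirac operator, $\mathcal{D}\psi_m = m\psi_m$, and integrate $\mathcal{D}$ by parts in the spacetime inner product~\eqref{indefitSP}. Concretely, $\PP\mathcal{T}_m\phi = \int_I m'\phi_{m'}\,\dif{m'} = \mathcal{D}\,\PP\phi$, and likewise for $\psi$. The difference $\braket{\PP\psi}{\PP\mathcal{T}_m\phi} - \braket{\PP\mathcal{T}_m\psi}{\PP\phi}$ is therefore $\la \PP\psi | \mathcal{D}\PP\phi\ra - \la \mathcal{D}\PP\psi|\PP\phi\ra$. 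Since $\mathcal{D}$ is formally symmetric with respect to $\Sl\cdot|\cdot\Sr$, the integrand equals a total divergence $\nabla_j \Sl \PP\psi | G^j \PP\phi\Sr$ (up to a factor $\pm i$). Only boundary terms survive.

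First step: work mode by mode in $(k,l)$ and in the separated radial form. I will insert the integral representation of Lemma~\ref{le:integralRep} as in~\eqref{eq:simplInnerProduct}. The $t$-integration then produces $2\pi\,\delta(\omega-\omega')$, so everything reduces to a single $\omega$-integral. In this representation $\mathcal{D}-m$ corresponds to the radial operator in~\eqref{eq:radialPDE}. Using the ODE for $\chi_{i,m}$ and $\chi_{j,m'}$, I expect the pointwise identity
\[
(m'-m)\,\dfrac{2r}{\sqrt{|\Delta|}}\,\la \chi_{i,m}, A^0\chi_{j,m'}\ra_{\C^2}
= c\,\partial_r\, \la \chi_{i,m}, A^r \chi_{j,m'}\ra_{\C^2}
\]
with a constant $c$ (of the form $\pm i$ times a normalisation). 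The $m$-dependent terms $\pm i r m$ in $S$ carry the off-diagonal structure. I will check this by differentiating $W(\chi_{i,m},\chi_{j,m'})$ and using that the $\omega$-terms cancel, since $\omega$ is real and the same in both factors. This is exactly the mechanism that makes the Wronskian $r$-independent for $m=m'$, as in Definition~\ref{def:wronskian}.

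Second step: integrate this identity in $r$ over $(r_-,\infty)$, split at $r_+$ because $A^r$ changes sign with $\epsilon(\Delta)$. The fundamental theorem of calculus yields boundary values at $\infty$, at $r_+$ from both sides, and at $r_-$. The boundary value at $\infty$ vanishes after the $m,m'$ integration by Lemma~\ref{decay}: the superposition decays like $\mathcal{O}(1/r)$ and the Riemann--Lebesgue argument kills the oscillating phases $e^{i(\Phi_\pm(m)-\Phi_\pm(m'))}$. The remaining contributions are $W(r_+^>) - W(r_+^<)$ together with the Cauchy horizon term $-W(r_-)$, with the signs fixed by orientation. Multiplying by $\overline{\widehat\psi_{i,m}}\widehat\phi_{j,m'}$, summing over $i,j$ and integrating over $m,m',\omega$ then gives $\Omega$ and the prefactor $2i$.

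The main obstacle is justifying the interchange of the $t$-, $r$-, $m$- and $\omega$-integrals, and making sense of the horizon limits. At $r_\pm$ the weight $r/\sqrt{|\Delta|}$ is singular, and the radial solutions oscillate like $e^{2i\omega u}$ by Lemma~\ref{le:krpMul}(ii),(iii). I would first regularise by cutting out $\varepsilon$-neighbourhoods of $r_\pm$ and a large-$r$ cutoff, so that the $r$-integrals are over compact intervals where the identity above is exact. I would then pass to the limit using the asymptotics of Lemma~\ref{le:krpMul}. The exponential error bounds $ae^{\pm bu}$ make the limits of the Wronskians exist, and the compact support of $\widehat\psi,\widehat\phi$ away from $\omega=\pm m$ (Definition~\ref{subsetH}) keeps all expressions smooth in $(\omega,m,m')$, so dominated convergence applies.
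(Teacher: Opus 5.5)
Your route is the paper's route. The paper also writes the radial equation as $\mathcal{D}\chi = m\chi$ and notes that the $\omega$- and $\xi$-terms are symmetric for the spin product. Only the $\partial_r$-term survives, and it becomes $i\,\partial_r \la \chi_{i,m}, A^r \chi_{j,m'}\ra_{\C^2}$: for $J=\begin{bmatrix} 0 & -\epsilon(\Delta) \\ 1 & 0 \end{bmatrix}$ one has $A^0 J = J^\dagger A^0 = A^r$, and the weight $r/\sqrt{|\Delta|}$ cancels exactly. The paper then splits the $r$-integral at $r_+$ because $A^r$ jumps there, and drops the term at infinity by Lemma~\ref{decay}. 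Your cutoff and dominated-convergence argument near $r_\pm$ and at large $r$ is more careful than the paper's bare appeal to Fubini, but the proof is the same.

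The one step you should carry out, rather than leave to ``signs fixed by orientation,'' is the sign bookkeeping, because it does not give the statement as printed. The fundamental theorem of calculus on $(r_-,r_+)$ gives $W(r_+^<)-W(r_-)$, so the two interior boundary values always enter with opposite signs. Adding $W(\infty)-W(r_+^>)$ from the exterior and the prefactor $-2i$ gives
\[
2i\,\big( W(r_+^>) - W(r_+^<) + W(r_-) \big),
\]
so the Cauchy-horizon term is $+W(r_-)$, not $-W(r_-)$. This is exactly what the paper's own displayed boundary terms $\big|_{r_+}^{\infty} + \big|_{r_-}^{r_+}$ with prefactor $-2i$ give. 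It is also the sign the paper needs later. In the proof of Theorem~\ref{thm:massDecompTheo}, in Remark~\ref{rmk:flux} and in Proposition~\ref{prp:fluxOp}, the paper uses that the two interior Wronskians cancel at $m=m'$ by $r$-independence, leaving $\Omega(\omega,m,m)=W(r_+^>)$. With the printed signs they would instead add up. So the minus sign in front of $W(r_-)$ in the statement is a misprint, and your write-up should state the corrected $\Omega$ rather than claim the printed one.
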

We emphasize that $\Omega$ depends on the wave modes $(k,l)$, thus the full expression 
reads $\Omega^{kl}(\omega, m , m')$.
\begin{proof}[Proof of Lemma~\ref{le:non_singular}.]
    Since the angular momentum modes are fixed, we can focus on the radial
    ODE. We start by re-writing~\eqref{eq:radialPDE} as
    \[
        \mathcal{D} \, \psi = m\, \psi \:,
    \]
    with $\mathcal{D}$ the Dirac operator expressed by
    \[
        \mathcal{D} = i \dfrac{\sqrt{|\Delta|}}{r}
        \begin{bmatrix}
            0 & -\epsilon(\Delta) \\
            1 & 0  \\
        \end{bmatrix} \partial_r
        + \dfrac{\omega}{r \sqrt{|\Delta|}}
        \begin{bmatrix}
            0 & |\Delta| \\
            2r^2 - \Delta & 0 \\
        \end{bmatrix} +
        \frac{i \, \epsilon(\Delta)}{r}
        \begin{bmatrix}
            \xi & 0  \\
            0 & -\xi \\
        \end{bmatrix}.
    \]
    By direct computation, one verifies that the second and third terms are symmetric with respect 
    to the spin inner product~\eqref{spinProduct}. In the next step, we evaluate the difference in 
    the inner product when applying the mass oscillation operators~\eqref{eq:simplInnerProduct}, 
    with the mass multiplication operator acting on the left and right sides.
    \begin{align}\label{diffProducts}
        \braket{\PP\psi}{\PP \mathcal{T}_m \phi} &- \braket{\PP \mathcal{T}_m\psi}{\PP \phi} = -2
        i \int_{r_-}^\infty \dif{r} \int_I \dif{m} \, \int_I \dif{m'}
        \int_{-\infty}^\infty \dif{\omega} \\ \nonumber
        &\times  \sum_{i, j}\overline{\widehat{\psi} _{i,m}(\omega)} \, \widehat{\phi}_{j,m'}
        (\omega) \Bigg( \Bigg\la
            \begin{bmatrix}
                0 & -\epsilon(\Delta) \\
                1 & 0  \\
            \end{bmatrix} \partial_r  \chi_{i, m}(\omega,r) \Bigg| A^0 \chi_{j, m'}
            (\omega,r) \Bigg\ra_{\C^2} \\ \nonumber
            & \hspace{4em} + \braket{\chi_{i, m}(\omega,r)}{A\, \partial_r \chi_{j, m'}
            (\omega,r)}_{\C^2} \Bigg)
    \end{align}
    We can apply Fubini's theorem to interchange the integrals, using the properties of $\hh^\infty$
    defined in~\ref{subsetH} to evaluate the above expression via partial integration.

    We observe that the matrix $A^r$ undergoes a sign change in its $(2,2)$-component, leading to a 
    finite discontinuity when crossing $r_+$ on a measure zero set. To deal with the sign flip, we 
    split the integration into two regions, $(r_-, r_+)$ and $(r_+, \infty)$, on each of which the 
    partial integration is well defined, and we then analyze their sum. A short computation yields
    \begin{align}
        \braket{\PP\psi}{\PP \mathcal{T}_m \phi} &- \braket{\PP \mathcal{T}_m \psi}{\PP \phi} = - 2
        i \int_I \dif{m} \, \int_I \dif{m'} \, \int_{-\infty}^\infty \dif
        {\omega} \notag \\
        & \times \, \sum_{i, j} \overline{\widehat{\psi}_{i, m}(\omega)}
        \widehat{\phi}_{j, m'}(\omega)\Bigg\{\braket{\chi_{i, m} (r, \omega)}{A^r\, \chi_{j, m'}(r, 
        \omega)}_{\C^2}\bigg|_{r_+}^\infty \notag \\
        &\hspace*{6em}+ \braket{\chi_{i, m} (r, \omega)}{A^r\, \chi_{j, m'}(r, \omega)}_{\C^2}
        \bigg|_{r_-}^{r_+}\Bigg\}
    \end{align}
    According to Lemma~\ref{decay}, the boundary term at spatial infinity vanishes, leaving only 
    the boundary contributions from the Cauchy and event horizons. Using Definition~\ref
    {def:wronskian} gives $\Omega(\omega, m,m')$ as a linear combination of Wronskians.
\end{proof}
Next, we compute the singular contribution to the spacetime inner product for $m = m'$. Since 
the computation is identical that in~\cite[Lemma~3.7]{sigbh}, we merely state the result and refer to~\cite{sigbh} for the proof 
(in our case, the $f^{kl\omega}_{i,m}$ are already orthogonal, which simplifies the last step of the 
proof).
\begin{Lemma}\label{singular}
    For all $\psi, \, \phi \in \hh^{\infty}$ we have
    \begin{align}
    \braket{\PP \psi}{ \PP \phi} &= 2\pi \int_I \dif{m} \int_{\R \backslash
    [-m,m] } \epsilon(\omega)\,
    \dif{\omega} \sum_{i, j = 1}^2 \overline{\widehat{\psi}_{i, m}(\omega)} \widehat{\phi}_{j, m}
    (\omega)\braket{f^\omega_{i, m}}{ f^\omega_{j, m}}_{\C^2} \\
    & -2 i \int_I \dif{m} \int_I \dif{m'} \,\pv{\frac{1}{m-m'}}  \notag \\
    &\hspace{4em}\times \int_{\R \backslash [-m,m] }\Big(\overline{\widehat{\psi}_{1,m} 
    (\omega)} \widehat{\phi}_{1,m'}(\omega) - \overline{\widehat{\psi}_{2,m} 
    (\omega)} \widehat{\phi}_{2, m'}(\omega)\Big)\dif{\omega} \\
    &+ \int_I \dif{m} \int_I\dif{m'} \, E(m, \, m') \notag
    \end{align}
        with $E \in L^\infty(I \times I)$ and $\rm{PV}$ being the Cauchy
        principal value.
\end{Lemma}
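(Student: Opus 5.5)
The plan follows the strategy of \cite[Lemma~3.7]{sigbh}. I would start from the reduced form~\eqref{eq:simplInnerProduct} for a fixed angular mode $(k,l)$. By Fubini, the $t$-integral produces $2\pi\,\delta(\omega-\omega')$, so what remains to analyse is the radial integral
\[
\int_{r_-}^\infty \frac{r}{\sqrt{|\Delta|}}\,\braket{\chi_{i,m}(\omega,r)}{A^0\chi_{j,m'}(\omega,r)}_{\C^2}\,\dif{r}\,,
\]
paired against $\overline{\widehat\psi_{i,m}(\omega)}\,\widehat\phi_{j,m'}(\omega)$ and integrated over $m,m'\in I$. I would introduce a cutoff $L$, i.e.\ integrate over $(r_-,u^{-1}(L))$ with $u$ the Regge--Wheeler coordinate, and let $L\to\infty$ only after the mass integrals have been carried out.

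\textbf{Bounded radial region.} On $(r_-,u^{-1}(L))$ I would reuse the computation behind Lemma~\ref{le:non_singular}. The identity $(\mathcal D-m)\chi_{i,m}=0$ gives
\[
(m-m')\braket{\chi_{i,m}}{A^0\chi_{j,m'}}\cdot\frac{r}{\sqrt{|\Delta|}} = \text{total $r$-derivative of } -i\,W(\chi_{i,m},\chi_{j,m'})\,.
\]
Hence the finite radial integral equals, up to a factor, $\frac{1}{m-m'}$ times the boundary Wronskians at $r_-$, $r_+^{\lessgtr}$ and at $u=L$. The horizon contributions are built from smooth functions of $(m,m')$. Their difference quotients at $m=m'$ stay bounded because the Wronskian Gram matrix is the same at $m=m'$ on both sides of $r_+$, up to the flux accounted for in $\Omega$, so these terms go into $E\in L^\infty(I\times I)$. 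On the region $|\omega|<m$ only the exponentially decaying $\chi_{1,m}$ survives, since $T$ is diagonal there. Its boundary term at $L$ decays, and it contributes only to $E$, which explains why the first two terms are restricted to $\R\setminus[-m,m]$.

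\textbf{Asymptotic region.} For $|\omega|>m$ I would insert the asymptotics of Lemma~\ref{le:krpMul}(i) into the Wronskian at $u=L$. The error $E_\infty=\mathcal O(1/u)$ gives, after the mass integrals, a contribution that vanishes as $L\to\infty$ by Riemann--Lebesgue, as in Lemma~\ref{decay}. The leading part is a combination of $e^{\pm i(w_1(m)-w_1(m'))L}$ (plus log-phase corrections) times $f^\omega_{i,m}$, $f^\omega_{j,m'}$. Writing $w_1(m)-w_1(m')=(m-m')\,\partial_m w_1+\dots$ with $\partial_m w_1=-m/w_1$, the kernel becomes
\[
\frac{e^{\pm i(m-m')\kappa L}}{m-m'}\,,\qquad \kappa=\kappa(m,m',\omega)\,.
\]
As $L\to\infty$, the standard distributional limits $\frac{e^{\pm i x L}}{x}\to \pm i\pi\delta(x)+\mathrm{PV}\frac1x$ apply. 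The $\delta$-parts combine, using $\epsilon(\kappa)=\epsilon(\omega)$, into the first term with $\braket{f^\omega_{i,m}}{f^\omega_{j,m}}_{\C^2}$. The principal-value parts, evaluated with the orthonormal transmission vectors $f_1=(1,0)^T$, $f_2=(0,1)^T$ and the signature $(1,-1)$ of the Wronskian at infinity, give the $\mathrm{PV}$ term with the combination $\overline{\widehat\psi_1}\widehat\phi_1-\overline{\widehat\psi_2}\widehat\phi_2$. Since the $f_i$ are already orthogonal, the cross terms drop out.

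\textbf{Main obstacle.} The delicate step is justifying the limit $L\to\infty$ uniformly. This requires controlling the $\ln u$ phase corrections and the $m$-dependence of $\Theta$ and of $\kappa$ near $m=m'$, so that the remainder is genuinely $L^\infty$, and showing that the horizon difference quotients stay bounded at $m=m'$. For this I would rely on the smooth $m$-dependence of the Jost solutions and on the support of $\widehat\psi$ being away from $\omega=\pm m$, which is exactly the situation treated in \cite{sigbh}.
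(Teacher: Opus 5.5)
\textbf{There is a genuine gap.} The paper proves this lemma only by reference to \cite[Lemma~3.7]{sigbh}. Your integration-by-parts bookkeeping goes wrong exactly at the step that produces the principal value.

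\textbf{What goes wrong.} On $(r_-,u^{-1}(L))$ your identity gives the radial integral as $-\frac{i}{m-m'}\big[W(L)-W(r_+^>)+W(r_+^<)-W(r_-)\big]$. Only this full bracket vanishes at $m=m'$.
\begin{itemize}
\item The horizon part alone is \emph{not} a bounded difference quotient. At $m=m'$ the interior terms cancel, $W(r_+^<)=W(r_-)$. But by $r$-independence on $(r_+,\infty)$ one has $W(\chi_{i,m},\chi_{j,m})(r_+^>)=\diag(1,-1)_{ij}$ for $|\omega|>m$.
\item Your reason, that the Gram matrix agrees on both sides of $r_+$, cannot hold. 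Inside, $A^r=\diag(1,1)$, so the interior Gram matrix is positive semidefinite.
\item This nonzero diagonal value is precisely $\Omega(\omega,m)$, which later defines $\mathfrak{B}_m$. Putting it into $E$ would make the flux operator vanish.
\item Conversely, the term at $u=L$ leaves no principal value in the limit. One has $e^{\pm i\kappa xL}\,\mathrm{PV}\frac{1}{x}\to\pm i\pi\,\epsilon(\kappa)\,\delta(x)$, because $\cos(\kappa xL)\,\mathrm{PV}\frac{1}{x}\to 0$ by Riemann--Lebesgue.
\item The rule $\pm i\pi\delta(x)+\mathrm{PV}\frac{1}{x}$ that you quote is the Sokhotski--Plemelj formula for $1/(x\mp i0)$. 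It is not a large-$L$ limit.
\end{itemize}

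\textbf{The two errors do not cancel.} Take the model $\int_{u_0}^L a\,e^{iyu}\,du=a\,(e^{iyL}-e^{iyu_0})/(iy)$. Dropping the lower term and applying your limit to the upper one gives $a\,[\pi\delta(y)-i\,\mathrm{PV}\frac{1}{y}]$. The correct result is $a\,[\pi\delta(y)+i\,\mathrm{PV}\frac{1}{y}]$. With the prefactor in \eqref{eq:simplInnerProduct}, you would obtain $+2i$ where the lemma has $-2i$.

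\textbf{How to repair it.} Keep the horizon terms.
\begin{itemize}
\item Their singular part is $-2i\,\mathrm{PV}\frac{1}{m-m'}\,W(r_+^>)(m,m)$. This is exactly the stated term with $\overline{\widehat\psi_{1,m}}\widehat\phi_{1,m'}-\overline{\widehat\psi_{2,m}}\widehat\phi_{2,m'}$.
\item It is restricted to $|\omega|>m$ because the decaying solution has zero Wronskian.
\item The remainder is a difference quotient of smooth functions and goes into $E$.
\item The $u=L$ term then supplies only the $\delta$-term.
\end{itemize}

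\textbf{Alternative repair.} Skip the integration by parts on the asymptotic piece.
\begin{itemize}
\item Split at a fixed $r_0>r_+$. The compact part is continuous in $(m,m')$ with integrable weight $r/\sqrt{|\Delta|}$, so it goes into $E$.
\item On $(r_0,\infty)$, insert the asymptotics and use $\int_{u_0}^\infty e^{iyu}\,du=\pi\delta(y)+i\,e^{iyu_0}\,\mathrm{PV}\frac{1}{y}$.
\end{itemize}
In this form the principal value is visibly tied to the lower endpoint, not to infinity.
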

\subsection{The Mass Decomposition Theorem}
We can now combine the previous results in the following theorem, which describes the mass 
decomposition of the spacetime inner product. This decomposition allows one to identify the 
signature operator by comparing the spacetime inner product with the conserved scalar product on 
Cauchy hypersurfaces.
\begin{Thm}\label{thm:massDecompTheo}
    For all $\psi, \, \phi \, \in \hh^\infty$ we have the identity
    \begin{align}
        \braket{\PP\psi}{\PP\phi} & = 2 \pi \int_I \dif{m} \int_{|\omega|>m} \epsilon(\omega)\,\dif
        {\omega} \, \sum_{i,j = 1}^2 \overline{\widehat{\psi}_{i, m}(\omega)}\,
        \widehat{\phi}_{j, m}(\omega)\braket{f^{\omega}_{i,m}}{f^{\omega}_{j,m}}_{\C^2}\label
        {massDecompSing} \\
        &-2 i \int_I \dif{m} \, \int_I \dif{m'}\, \pv{\dfrac{1}{m-m'}}\notag \\
        &\hspace{6em} \times \int_{\R}\dif{\omega} \sum_{i, j=1}^2 \overline{\widehat{\psi}_{i, m}(\omega)}\, \widehat{\phi}_
        {j, m'}(\omega)\, \Omega(\omega,m,m')_{ij}\label{massDecompReg}\: ,
    \end{align}
    where $\overline{\widehat{\phi}_{i,m}(\omega)}$ and $\widehat{\psi}_{j, m}(\omega)$ are the 
    smooth coefficients from Lemma~\ref{le:integralRep}, $f^{\omega}_{i, m} \in \C^2$ are the 
    transmission vectors defined in~\ref{def:transmission_vec} and $\Omega(\omega, m, m')$ the 
    boundary matrix from Lemma~\ref{le:non_singular}.
\end{Thm}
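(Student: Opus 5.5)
The plan is to combine Lemma~\ref{singular} with Lemma~\ref{le:non_singular}, following the strategy of~\cite[Theorem~3.8]{sigbh}. The only term left open in Lemma~\ref{singular} is the regular remainder $\int_I\int_I E(m,m')\,dm\,dm'$. I will identify it, together with the explicit principal value term, with the boundary matrix $\Omega$, by exploiting the failure of symmetry of the mass multiplication operator~$\mathcal{T}_m$.

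\textbf{Step 1: the leading term.} The first line of Lemma~\ref{singular} is already~\eqref{massDecompSing}. Since $\epsilon(\omega)$ and the support condition in Definition~\ref{subsetH} restrict to $|\omega|>m$, the domain $\R\setminus[-m,m]$ coincides with $|\omega|>m$.

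\textbf{Step 2: an identity for the full inner product.} Write the right side of~\eqref{eq:simplInnerProduct} as $\int_I\int_I K(m,m')\,dm\,dm'$. Here
\[
K(m,m') = -2\int_{r_-}^\infty \frac{r}{\sqrt{|\Delta|}}\int_\R \sum_{i,j}\overline{\widehat\psi_{i,m}}\,\widehat\phi_{j,m'}\,\la\chi_{i,m}|A^0\chi_{j,m'}\ra\,d\omega\,dr,
\]
understood as a distribution in $(m,m')$. Inserting this into Lemma~\ref{le:non_singular} gives
\[
(m'-m)\,K(m,m') = 2i\int_\R \sum_{i,j}\overline{\widehat\psi_{i,m}}\,\widehat\phi_{j,m'}\,\Omega(\omega,m,m')_{ij}\,d\omega .
\]
This identity holds in the distributional sense, because Lemma~\ref{le:non_singular} applies to all families $\psi,\phi$. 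In particular it holds after replacing $\widehat\phi_{j,m'}$ by $g(m')\widehat\phi_{j,m'}$ and $\widehat\psi_{i,m}$ by $h(m)\widehat\psi_{i,m}$ for arbitrary smooth test functions $g,h$; this is what lets one localize in $(m,m')$.

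\textbf{Step 3: divide by $m'-m$.} Dividing yields $K = -2i\,\mathrm{PV}\frac{1}{m-m'}\,(\ldots) + \delta(m-m')\,c(m)$. The principal value part is exactly~\eqref{massDecompReg}. The $\delta$-part is a local term whose coefficient is fixed by Lemma~\ref{singular}.

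\textbf{Step 4: absorbing $E$.} It remains to see that Lemma~\ref{singular}'s principal value term plus $E$ reproduces this principal value part, with no further regular remainder. Two facts are needed. First, on the diagonal $m=m'$ the matrix $\Omega(\omega,m,m)$ reduces to the Wronskian Gram matrix at infinity minus the horizon contributions. Second, the Gram matrix is $\diag(1,-1)$ on $(r_+,\infty)$, and the $\chi_{i,m}$ are normalized there. Hence the diagonal part of $\Omega$ matches the $\overline{\widehat\psi_1}\widehat\phi_1-\overline{\widehat\psi_2}\widehat\phi_2$ kernel of Lemma~\ref{singular}. The difference $\Omega(\omega,m,m')-\Omega(\omega,m,m)$ is smooth and vanishes at $m=m'$, so its principal value integral is a regular integral. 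This is precisely how $E$ arises, which identifies $E$ explicitly.

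\textbf{Main obstacle.} The hard step is justifying the distributional division in Steps 2--4. This requires smoothness of $\Omega(\omega,m,m')$ in $(m,m')$ uniformly in $\omega$ on the support of $\widehat\psi,\widehat\phi$. That uniformity in turn needs smooth $m$-dependence of the horizon Wronskians at $r_+^{<}$, $r_+^{>}$ and $r_-$, including the coefficient $c(\omega)$ in~\eqref{eq:bc_phi2}. I would establish this from the smooth dependence of the Jost solutions of~\cite[Lemma~4.3]{finster-krpoun1} on $m$ and from the exponential error bounds of Lemma~\ref{le:krpMul}(ii),(iii). These bounds guarantee that the limits $r\to r_\pm$ can be taken uniformly, and the supports away from $\omega=\pm m$ given by Definition~\ref{subsetH} keep all quantities regular.
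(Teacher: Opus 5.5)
Your route is the paper's: Lemma~\ref{le:non_singular} identifies $(m'-m)$ times the kernel with a smooth function built from $\Omega$. Division by $m'-m$ then gives a principal value plus an undetermined $\delta(m-m')$ term, which is fixed by comparison with Lemma~\ref{singular}, and a Taylor expansion of $\Omega$ about the diagonal produces the bounded remainder. Your localization with test functions $g(m')$, $h(m)$ is a cleaner substitute for the paper's disjoint-support argument. You are also right that the paper uses smoothness of $\Omega$ in $(m,m')$ tacitly.

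There is, however, a gap in Step~4, and Step~1 shows where it comes from. Definition~\ref{subsetH} only removes neighbourhoods of $\omega=\pm m$; it does not confine the support to $|\omega|>m$. For $|\omega|<m$ one has $T=\tfrac12\diag(q,0)$, so $\widehat\psi_{1,m}(\omega)$ is in general nonzero there. Consequently the principal-value term of the theorem runs over all $\omega\in\R$, whereas that of Lemma~\ref{singular} only covers $|\omega|>m$. Your diagonal matching uses the $\diag(1,-1)$ normalization of the Wronskian Gram matrix on $(r_+,\infty)$, which holds only for $|\omega|>m$. Nothing you wrote shows that $-2i\,{\rm PV}\frac{1}{m-m'}\int_{|\omega|<m}\overline{\widehat\psi_{1,m}}\,\widehat\phi_{1,m'}\,\Omega_{11}\,d\omega$ is a bounded kernel. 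Without that, the difference between your Step~3 representation and Lemma~\ref{singular} is not known to be regular, so you cannot conclude that the two $\delta$-coefficients agree.

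The paper's final step supplies exactly this. For $m=m'$ the Wronskian is $r$-independent, so the contributions at $r_+^<$ and $r_-$ cancel. You need this cancellation for $|\omega|>m$ as well: ``minus the horizon contributions'' only yields $\diag(1,-1)$ if those terms drop out. The remaining exterior Wronskian equals its value at infinity, which vanishes for $|\omega|<m$ because $\chi_{1,m}$ decays exponentially. Hence $\Omega_{11}(\omega,m,m')=\mathcal{O}(m-m')$ there, and this part is absorbed into $E$.

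Alternatively, you can close the gap without computing Wronskians. Multiply the representation of Lemma~\ref{singular} by $(m'-m)$ and compare with your Step~2 identity. The two numerators then differ by $(m'-m)E$ with $E\in L^\infty$, so they coincide on the diagonal. The difference of the two representations of the kernel then says that $(c-c_s)\,\delta(m-m')$ equals a locally bounded function, which forces $c=c_s$.
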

\begin{proof} $\;$
    We derive the singular behavior using the results from Lem\-ma~\ref{le:non_singular} and
    Lemma~\ref{singular}. We emphasize that the integrals can be interchanged due to the properties 
    of $\hh^\infty$ defined in Definition~\ref{subsetH}. Our goal is to obtain an explicit 
    expression for the integral kernel $A(m,m')$ from
    \begin{align}
        \braket{\PP \psi}{\PP \phi} = \int_I \dif{m} \, \int_I \dif{m'} \, A(m,m')\, .
    \end{align}
    From Lemma~\ref{singular}, we know that $A$ is a distribution with a pole singularity on the 
    mass diagonal. Hence, it has the general form
    \[
        A(m,m') = \pv{\dfrac{1}{m-m'}}B(m,m') + i \pi \delta(m-m') C(m) + E(m,m') \, ,
    \]
    with $B(m,m')$ and $C(m)$ some regular functions. In the following, we determine these 
    functions with the help of Lemmas~\ref{le:non_singular} and~\ref{singular}. We begin with the 
    case $m \neq m'$ and assume that $\widehat{\psi}_{i,m}(\omega)$ and $\widehat{\phi}_{j,m'}
    (\omega)$ have disjoint mass support. In this situation, we can apply the results from 
    Lemma~\ref{le:non_singular} and compare them with the relation
    \begin{align}
        \braket{\PP \, \psi}{\PP \, T \, \phi} & - \braket{\PP \, T\, \psi}{\PP 
        \, \phi} = - 2 \int_{r_-}^\infty \dif{r}\dfrac{r}{|\Delta|^
        {1/2}} \int_I \dif{m} \, \int_I \dif{m'} \, \int_{-\infty}^\infty \dif
        {\omega} \notag \\ 
        & \times (m'-m)\sum_{i, j} \overline{\widehat{\psi}_{i, m}(\omega)}\widehat{\phi}_{j, m'}(\omega) 
        \braket{\chi_{i, m} (r, \omega)}{A^0 \chi_{j, m'} (r, \, \omega)}_{\C^2}\:,
    \end{align}
    where we used the definition of the mass multiplication operator~\ref{massMulOp}. This leads to 
    an integral kernel of the form
    \begin{align}
        \label{kernel}
        A(m,m') =-\dfrac{2 i}{m - m'} \int_{-\infty}^\infty \sum_{i,j= 1} 
        ^2 \overline{\widehat{\psi}_{i, m}(\omega)} \widehat{\phi}_{j, m'} (\omega)\, \Omega(
            \omega,m,m')_{ij} \dif{\omega} \, .
    \end{align}
    By generalizing to all mass supports and using the distributional relation 
    (sometimes referred to as the Sokhotski--Plemelj formula)
    \[
        \lim _{\varepsilon \searrow 0 }\dfrac{1}{m'-m +i\varepsilon} = \pv{\dfrac{1}{m' - m}} - i 
        \pi \delta(m' - m) \, ,
    \]
    we reproduce the singular terms from Lemma~\ref{singular}. This can be seen directly by 
    comparing the above expression with Lemma~\ref{singular}. Off the diagonal, the comparison with 
    Lemma~\ref{singular} shows that the principal value is already completely determined. Since the 
    singularity is a simple pole, the Sokhotski--Plemelj formula ensures that the corresponding 
    distributional expansion is unique. Consequently, the contribution on the mass diagonal 
    must also coincide with Lemma~\ref{singular}.

    In the final step, we need to show that for the case $|\omega| < m$, the integral involving the 
    principal value can be absorbed into the bounded function $E(m,m') \in L^\infty(I \times I)$. 
    To this end, we analyze the behavior of the matrix components and perform a Taylor expansion in 
    the mass difference $(m-m')$:
    \[
        \Omega(\omega,m,m')_{11} = \Omega(\omega,m,m)_{11} + \mathcal{O}(m-m')
    \]
    (all other components vanish due to $T$ for $|\omega| < m$ as defined in Lemma~\ref
    {le:integralRep}). Since $\Omega$ is, by Lemma~\ref{le:non_singular}, a linear combination of 
    Wronskians, we can use the $r$-independence for $m=m'$. Because $W(\chi)(r^<_+)$ and $W(\chi)
    (r_-)$ lie in the same regularity region, they cancel each other due to the opposite sign. We 
    are left with
    \[
        \Omega(\omega,m,m')_{11} = W(\chi^\omega_{1,m}, \chi^\omega_{1,m})(r_-) + \mathcal{O}(m-m') 
        \, .
    \]
    With the same argument as before, we can relate the Wronskian to its value at spatial infinity. 
    Since $\chi_{1,m}$ describes the exponentially decaying solution for $|\omega| < m$ and the 
    Wronskian is constant, the first term vanishes. Thus, we conclude that
    \begin{align}\label{eq:zeroWronskian}
        \Omega(\omega,m,m')_{11} = \mathcal{O}(m-m') \, ,
    \end{align}
    which yields the result.
\end{proof}
\begin{Remark} (Interpretation of the boundary term)\label{rmk:flux}
{\rm{We emphasize that the second term~\eqref{massDecompReg} in the above theorem arises when integrating 
the Dirac operator by parts. This allows us to interpret the term as a boundary contribution. 
Typically, such boundary terms describe the flux of Dirac currents through a given surface. In our 
case, however, due to the different masses of the wave functions, it is difficult to interpret this 
directly as a flux in the classical sense.

Nevertheless, in the limit $m \to m'$, the double integral can be interpreted as the probability 
flux of the Dirac current through the Cauchy and event horizons. We therefore define the {\em{fermionic flux operator}} as
\[
    \Scpr{\psi_m}{\mathfrak{B}_m \,\phi_m} := \lim_{m' \rightarrow m} \mathfrak{B}(\psi_m, \, 
    \phi_{m'}) \, ,
\]
where $\mathfrak{B}(\psi_m, \phi_{m'})$ is defined by the integrand of equation~\eqref
{massDecompReg}. We want to point out that this is the only natural definition of the fermionic flux 
operator. The reasoning follows that of~\cite{sigbh}, to which the interested reader is referred.
Additionally, in this limit, the contributions from the Cauchy horizon and the interior
event horizon cancel each other in the boundary matrix $\Omega(\omega, m)$, leaving its form 
similar to the exterior Schwarzschild case. 
This point will be discussed in Section~\ref{section5}.

In contrast to the fermionic flux term, the first summand in the
mass decomposition~\eqref{massDecompSing}
only involves a single mass integral.
This contribution describes the behavior of the Dirac
waves in the asymptotic region for large times.}} \QEDrem
\end{Remark}
With the above observations in mind, we can apply the mass decomposition theorem to introduce the 
fermionic signature operator and the fermionic flux operator in the Reissner-Nordström geometry. Since 
they have a joint spectral decomposition with the angular operator $\mathcal{A}$ (as discussed 
later) we define them for fixed angular mode $(k,l)$. For a better overview, we again add the wave 
mode indices.
\begin{Def}
    \label{def:sig_op_def}
    The fermionic signature operator $\S^{kl}_m$ in a Reissner--Nordström space time $(\mm, g)$ is 
    defined for some $\psi_m, \, \phi_m \, \in \, \hh^{\infty}$ as the map $\S^{kl}_m : \hh^
    {\infty} 
    \longrightarrow \hh^{\infty}$ given by
    \begin{align}
        \Scpr{\psi_m}{\S^{kl}_m \, \phi_m}_m := 4\pi \int_{|\omega|>m} \dif{\omega} \, \epsilon
        (\omega) \, \sum_{i,j=1}^2 \overline{\widehat{\psi}^{kl}_{i,m}(\omega)}\, \widehat{\phi}^
        {kl}_{j,m}(\omega) \, \braket{f^{kl\omega}_{i,m}}{f^{kl\omega}_{j,m}}_{\C^2}
    \end{align}
\end{Def}
\begin{Def}
    \label{def:flux_op}
    The fermionic flux operator $\mathfrak{B}^{kl}_m$in a Reissner-Nordström space time $(\mm, g)$ is 
    defined for some $\psi_m, \, \phi_m \, \in \, \hh^{\infty}$ as the map $\mathfrak{B}^{kl}_m : 
    \hh^{\infty} \longrightarrow \hh^{\infty}$ given by
    \begin{align}
        \Scpr{\psi_m}{\mathfrak{B}^{kl}_m \,\phi_m}_m := -2 \pi
        \int_{\R}\dif{\omega} \sum_{i,j = 1}^2 \overline{\widehat{\psi}^{kl}_{i,m}(\omega)}\, 
        \widehat{\phi}^{kl}_{j,m}(\omega)\, \Omega(\omega, m)_{ij}
    \end{align}
\end{Def}
\section{The Fermionic Signature Operator}\label{section4}
In this section, we compute the spectrum of the fermionic signature operator. In doing so, we want 
to analyze what information the spectrum encodes about the solution space for fixed $m$, which we 
denote by $\hh^\infty_m$. Using Definition~\ref{def:sig_op_def}, the result of Lemma~\ref
{scalLemma}, and Theorem~\ref{thm:massDecompTheo}, we obtain an analytic expression for the 
fermionic signature operator. This result is stated in the following proposition.
\begin{Prp}\label{prp:signatureOp}
    For fixed angular momentum modes $(k, l)$, the fermionic signature operator $\S_m$ in the 
    Reissner–Nordström geometry can be expressed for all $\psi \in \hh^{\infty}_m$ as follows:
    \begin{align}
        \S^{kl}_m = \dfrac{1}{\pi} \int_{|\omega| > m} \dif{\omega} \, \epsilon({\omega}) \sum_
        {i,j= 1}^2\, \big( T^2 \big)_{ij} \,\Scpr{\Psi^{kl}_{j,m}(\omega)}{\, \cdot} \Psi^{kl}_{i,m}(\omega, 
        r, \vartheta, \varphi)
    \end{align}
    Additionally, the integral kernel has the form
    \begin{align}\label{eq:kernle_s}
        \widehat{(\S\psi)}^{kl}_{i,m} =2 \, \epsilon(\omega)\, \Idmat_{|\omega| > m}(\omega)\, 
        \sum_{j = 1}^2 t_{ij} \,\widehat{\psi}^{kl}_{j,m}(\omega)\, .
    \end{align}
\end{Prp}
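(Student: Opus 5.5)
Since both $\S^{kl}_m$ and the scalar product $\Scpr{\cdot}{\cdot}_m$ are given as sesquilinear forms in the Fourier coefficients $\widehat{\psi}^{kl}_{i,m}(\omega)$, the plan is to identify the operator by comparing the two forms. By Lemma~\ref{scalLemma}, for fixed $(k,l)$,
\[
\Scpr{\psi_m}{\chi_m}_m = 2\pi \int_\R \sum_{i,j} (T^{-1})^{ij}\, \overline{\widehat{\psi}_{i,m}(\omega)}\, \widehat{\chi}_{j,m}(\omega)\, \dif{\omega} \,.
\]
Definition~\ref{def:sig_op_def} gives the same kind of form with the matrix $2\,\epsilon(\omega)\,\Idmat_{|\omega|>m}\, G(\omega)$ in place of $T^{-1}$, where $G_{ij} = \braket{f^{kl\omega}_{i,m}}{f^{kl\omega}_{j,m}}_{\C^2}$. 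The transmission vectors are $f_1=(1,0)^T$ and $f_2=(0,1)^T$, so $G = \Idmat$.

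\textbf{Step 1: the coefficient kernel.} Suppose $\widehat{(\S\phi)}_{i,m} = \sum_j K_{ij}\,\widehat{\phi}_{j,m}$. Comparing the two forms for all $\psi$ requires $T^{-1}K = 2\,\epsilon\, \Idmat_{|\omega|>m}$, and hence
\[
K = 2\,\epsilon(\omega)\, \Idmat_{|\omega|>m}\, T .
\]
This is \eqref{eq:kernle_s}. Two points need checking here:
\begin{itemize}
\item For $|\omega|>m$, $T$ is invertible, since $\det(2T) = 1-\tanh^2\vartheta > 0$ and $T$ is Hermitian.
\item For $|\omega|<m$, $T$ is degenerate, but the indicator kills this region. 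Also, the coefficients $\widehat{\phi}_{2,m}$ vanish there by Lemma~\ref{le:integralRep}, so the comparison is only needed on the range of $T$.
\end{itemize}
I must also make sure that the map $\widehat{\phi} \mapsto K\widehat{\phi}$ really produces admissible coefficients of a solution. This means coefficients of the form $\frac{1}{2\pi} T\,(\Psi_j|\cdot)$, so that the representation of Lemma~\ref{le:integralRep} is self-consistent. Since $K$ is $T$ composed with a multiplication by a smooth function on the support, this holds. Smoothness and compact support away from $\pm m$ are preserved, so $\S^{kl}_m$ maps $\hh^\infty_m$ into itself.

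\textbf{Step 2: the integral representation.} I insert $\widehat{\psi}_{j,m} = \frac{1}{2\pi}\sum_l t_{jl}\,(\Psi_{l,m}(\omega)|\psi)$ into the kernel from Step 1 and reconstruct $\S\psi$ at $\tau=0$ via Lemma~\ref{le:integralRep}:
\[
\S\psi = \int \sum_i \widehat{(\S\psi)}_{i,m}\, \Psi_{i,m}\, \dif{\omega} = \frac{1}{\pi}\int_{|\omega|>m} \epsilon(\omega) \sum_{i,l} (T^2)_{il}\, (\Psi_{l,m}|\psi)\, \Psi_{i,m}\, \dif{\omega} ,
\]
which is the claimed formula.

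\textbf{Main obstacle.} The main difficulty is the bookkeeping of normalizations: the factor $2\pi$ in Lemma~\ref{scalLemma}, the factor $4\pi$ in Definition~\ref{def:sig_op_def}, and the absorbed $\tfrac12$ in $T$. I will fix these by testing with $\psi$ supported in a single frequency window. Once the constants are fixed, symmetry of $\S$ follows immediately from $T$ being Hermitian and $\epsilon$ being real.
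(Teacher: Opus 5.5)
Your proposal is correct and follows essentially the same route as the paper: you compare the form in Definition~\ref{def:sig_op_def} with Lemma~\ref{scalLemma}, using that the $f^{kl\omega}_{i,m}$ are orthonormal, read off the coefficient kernel $2\,\epsilon(\omega)\,\Idmat_{|\omega|>m}\,T$, and then insert $\widehat{\psi}_{j,m}=\frac{1}{2\pi}\sum_l t_{jl}\,(\Psi_{l,m}|\psi)$ into the integral representation to obtain $\frac{1}{\pi}T^2$. Your constants already come out right ($4\pi/2\pi=2$ and $2\cdot\frac{1}{2\pi}=\frac{1}{\pi}$), and your remarks on the invertibility of $T$ for $|\omega|>m$ and on the degenerate region $|\omega|<m$ make explicit two points the paper passes over.
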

\begin{proof}
    Let $\phi_m , \psi_m \in \hh^{\infty}_m$. Comparing the expressions for the signature operator 
    from Definition~\ref{def:sig_op_def} with Lemma~\ref{scalLemma}, we obtain
    \begin{align}
        2\pi &\int_{\R}\dif{\omega} \sum_{i, \, j = 1}^2 (T^{-1})^{ij} \overline{\widehat{\psi}^
        {kl}_{j,m}(\omega)} \widehat{(\S \phi)}^{kl}_{i,m} \underbrace{=}_{\text{lemma}} \Scpr
        {\psi_m}{\S^{kl}_m \phi_m} \notag \\
        &\underbrace{=}_{\text{def}} 4\pi \int_{|\omega|>m} \dif{\omega} \, \epsilon({\omega}) 
        \sum_{i,j= 1}^2 \overline{\widehat{\psi}^{kl}_{i,m}(\omega)} \widehat{\phi}^{kl}_{j,m}
        (\omega) \braket{f^{kl\omega}_{i,m}}{f^{kl\omega}_{j,m}}_{\C^2}\:.
    \end{align}
    Using the identity from the inverse matrix $T^{-1}$ from Lemma~\ref{le:integralRep} and that 
    $f^{kl\omega}_{i,m}$ are $\C^2$-orthonormal vectors, the above equation can be written in the 
    form
    \begin{align}
        &\int_{|\omega| > m} \dif{\omega} \sum_{i,j = 1}^2 (T^{-1})^{ij}\,\overline
        {\widehat{\psi}^{kl}_{j,m}(\omega)}\,\widehat{(\S\phi)}^{kl}_{i,m}(\omega) \notag \\
        &=\int_{\R} \dif{\omega} \sum_{i,b=1}^2 (T^{-1})^{ib} \, \overline{\widehat
        {\psi}^{kl}_{i,m}(\omega)}\,2\, \Idmat_{|\omega|>m}(\omega)\, \epsilon({\omega}) \sum_{j=1}
        ^2 t_{bj} \, \widehat{\phi}^{kl}_{m, j}(\omega)\:,
    \end{align}
    from which the integral kernel can be read off. In the next step, we use the expression 
    for the kernel together with Lemma~\ref{le:integralRep} to obtain
    \begin{align}
        (\S^{kl}_m&\,\psi_m)(r, \vartheta, \varphi) = \int_{\R} \dif{\omega} \, \sum_{i=1}^2 
        \widehat{(\S \psi)}^{kl}_{i,m}(\omega) \Psi^{kl}_{i,m}(\omega, r, \vartheta, \varphi) 
        \notag \\
        & =2 \int_{|\omega| > m}\dif{\omega}\,\epsilon({\omega})\sum_{i, j =1}^2 t_{ij}\,
        \widehat{\psi}^{kl}_{j,m}(\omega) \Psi^{kl}_{i,m}(\omega, r, \vartheta, \varphi) \notag \\
        & = \dfrac{1}{\pi}\int_{|\omega| >m} \dif{\omega}\,\epsilon({\omega}) \sum_{i,j=1}^2 
        (T^2)_{ij} \, \Scpr{\Psi^{kl}_{j,m}(\omega)}{\psi_{m}}\big|_{\tau = 0} \Psi^{kl}_{i, m}
        (\omega, r, \vartheta, \varphi)\, .
    \end{align}
    This completes the proof.
\end{proof}
Now we compute the spectrum of the signature operator and show that it is bounded, so that $\S_m$ 
is a self-adjoint operator on $\hh^\infty_m$.
\begin{Corollary}\label{cor:props_s}
For fixed angular momentum modes $(k,l)$,
the fermionic signature operator $\S^{kl}_m$ is  a 
    symmetric, bounded operator with norm
    \begin{align}
        \|\S^{kl}_m \|_{\rm{op}} \leq 2 \:.
    \end{align}
    It commutes with the Dirac Hamiltonian $H$ and admits a spectral decomposition of 
    the form
    \begin{align}
        \S^{kl}_m = \int_{|\omega| > m} \, \S^{kl}_m(\omega) \, \dif{E}_{\omega} \:,
    \end{align}
    where $\dif{E}_{\omega}$ denotes the spectral measure of $H$. Moreover, 
    the operator~$\S^{kl}_m(\omega)$ has 
    the eigenvalues
    \begin{align}
        \lambda_\pm(\S^{kl}_m) = \left\{ \begin{array}{ll}
        2\, \epsilon(\omega)\,n_F (\omega) & \text{in case~$+$} \\
         2\, \epsilon(\omega)\,(1 - n_F (\omega)) & \text{in case~$-$} \end{array} \right.
        \quad \text{with} \quad n_F(\omega) = \dfrac{1}{1+e^{2\vartheta(\omega)}},
    \end{align}
    and
    \[
        \vartheta(\omega) = \dfrac{1}{2}\:{\rm{arccosh}}\big(\|f^\omega_{2,m}\|_{\C^2}\big) \:.
    \]
    with $f^\omega_{2,m}$ the transmission vector of $\Phi_2$ defined in~\eqref{eq:bc_phi2}.
\end{Corollary}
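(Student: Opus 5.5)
Everything will be read off from Proposition~\ref{prp:signatureOp}, which shows that $\S^{kl}_m$ acts fibrewise in $\omega$ on the coefficient vector $\widehat\psi(\omega)=(\widehat\psi_{1,m}(\omega),\widehat\psi_{2,m}(\omega))^T$ as multiplication by the $2\times 2$ matrix
\[
\S^{kl}_m(\omega) := 2\,\epsilon(\omega)\,\Idmat_{|\omega|>m}(\omega)\, T'(\omega), \qquad T' := \begin{pmatrix} 1 & t \\ \overline{t} & 1 \end{pmatrix},
\]
with $t = e^{-i(\alpha-\beta)}\tanh(\vartheta)$. The plan is to identify the relevant fibre Hilbert structure, prove symmetry and the norm bound fibrewise, and then diagonalise this $2\times 2$ matrix explicitly.

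\textbf{Fibre structure and commutation with $H$.} By Lemma~\ref{scalLemma}, $\hh^\infty_m$ is unitarily a direct integral over $\omega$ of $\C^2$ equipped with the form $2\pi\,\langle u, T^{-1} v\rangle$. Here $T$ is $\tfrac12$ times the matrix in Lemma~\ref{le:integralRep}; I read that matrix as Hermitian, i.e.\ with off-diagonal entries $t$ and $\overline t$, as required by the pseudo-normalisation. Since $|t|=\tanh\vartheta<1$, $T$ is positive definite and this is a genuine inner product. In the representation of Lemma~\ref{le:integralRep}, the time evolution $e^{-i\tau H}$ acts as multiplication by $e^{-i\omega\tau}$. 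The operator $\S^{kl}_m$ is multiplication by a matrix function of $\omega$ alone, so it commutes with every function of $H$. This yields the decomposition $\S^{kl}_m=\int_{|\omega|>m}\S^{kl}_m(\omega)\,\dif E_\omega$, and in particular $\S^{kl}_m(\omega)=0$ for $|\omega|\le m$.

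\textbf{Symmetry and eigenvalues.} Write the kernel as $\widehat{(\S\psi)}=2\epsilon\,\Idmat\,(2T)\widehat\psi$. Then
\[
(\psi|\S\phi)=2\pi\int \overline{\widehat\psi}^T\, T^{-1}\,2\epsilon\,(2T)\,\widehat\phi\,\dif\omega = 8\pi\int_{|\omega|>m}\epsilon(\omega)\,\langle\widehat\psi,\widehat\phi\rangle_{\C^2}\,\dif\omega,
\]
which is manifestly Hermitian in $\psi,\phi$; this agrees with Definition~\ref{def:sig_op_def} once the orthonormality of the $f_{i,m}$ is used. This gives symmetry. Moreover, $\S(\omega)$ is self-adjoint with respect to the fibre inner product $\langle\cdot,T^{-1}\cdot\rangle$, so its eigenvalues are real and coincide with those of $2\epsilon\,T'$. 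The eigenvalues of $T'$ are $1\pm|t| = 1\pm\tanh\vartheta$, which I rewrite as
\[
\frac{2e^{\pm\vartheta}}{e^{\vartheta}+e^{-\vartheta}} = \frac{2}{1+e^{\mp 2\vartheta}} .
\]
With $n_F=(1+e^{2\vartheta})^{-1}$ this gives $1-\tanh\vartheta = 2n_F$ and $1+\tanh\vartheta = 2(1-n_F)$. These combine into the eigenvalues $2\epsilon\, n_F$ and $2\epsilon(1-n_F)$ up to the normalisation convention of $T$; I would carefully track the factor $\tfrac12$ absorbed into $T$ in Lemma~\ref{le:integralRep}, since this is exactly where a factor of $2$ can slip. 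The formula for $\vartheta$ in terms of $\|f^\omega_{2,m}\|$ is then just \eqref{eq:vartheta}.

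\textbf{Norm bound and main obstacle.} In the fibre Hilbert structure the operator $\S(\omega)$ is self-adjoint, so its norm equals its spectral radius, $\max\{2n_F,\,2(1-n_F)\}\le 2$ because $n_F\in(0,1)$. Taking the essential supremum over $\omega$ gives $\|\S^{kl}_m\|_{\rm op}\le 2$ on the dense set $\hh^\infty_m$. Boundedness then allows $\S^{kl}_m$ to be extended to a bounded symmetric, hence self-adjoint, operator. The main obstacle is consistency of conventions rather than analysis. One must check that $T$ is really Hermitian and positive, which needs $|t|<1$, so that the fibre form is an inner product and the operator norm is controlled by the eigenvalues. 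One must also reconcile the normalisations between Definition~\ref{def:sig_op_def}, Lemma~\ref{scalLemma} and Proposition~\ref{prp:signatureOp}, so that the eigenvalues come out as exactly $2\epsilon n_F$ and $2\epsilon(1-n_F)$ and not a rescaled version of them.
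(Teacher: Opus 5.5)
Your route is the paper's: you read off from Proposition~\ref{prp:signatureOp} that $\S^{kl}_m$ acts fibrewise in $\omega$ by a $2\times 2$ matrix, so it commutes with $H$. You then diagonalise the Hermitian matrix $T$, rewrite $\tfrac12(1-\tanh\vartheta)=n_F$, and bound the norm by the largest eigenvalue. You also identify the fibre inner product $\langle\cdot,T^{-1}\cdot\rangle$ from Lemma~\ref{scalLemma} and check $|t|=\tanh\vartheta<1$. That is a real improvement, because it justifies that the operator norm on $\hh_m$ equals the spectral radius of the fibre matrix, which the paper's proof takes for granted.

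There is, however, a factor-of-two slip that you flag but do not resolve, and as written it breaks your computation.

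\begin{itemize}
\item In \eqref{eq:kernle_s} the $t_{ij}$ are the entries of $T$ itself, which already carries the factor $\tfrac12$. So the fibre matrix is $2\epsilon(\omega)\,T=\epsilon(\omega)\,T'$, not $2\epsilon(\omega)\,T'=4\epsilon(\omega)\,T$.
\item With your matrix the eigenvalues would be $2\epsilon(1\pm\tanh\vartheta)=4\epsilon(1-n_F),\,4\epsilon n_F$, and the norm bound would be $4$.
\item Your symmetry check produces $8\pi\int_{|\omega|>m}\epsilon\,\langle\widehat\psi,\widehat\phi\rangle_{\C^2}\,\dif{\omega}$. This does not agree with the $4\pi$ in Definition~\ref{def:sig_op_def}, contrary to what you assert.
\item The value $\max\{2n_F,2(1-n_F)\}$ you then quote is the spectral radius of $\epsilon T'$, not of your own $\S(\omega)$. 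The proposal is therefore internally inconsistent, even though its final numbers are right.
\end{itemize}

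With the correct kernel everything closes:
\[
\Scpr{\psi_m}{\S^{kl}_m\phi_m}_m = 2\pi\int_{\R}\widehat\psi(\omega)^{\dagger}\,T^{-1}\big(2\epsilon(\omega)\Idmat_{|\omega|>m}(\omega)\,T\big)\widehat\phi(\omega)\,\dif{\omega} = 4\pi\int_{|\omega|>m}\epsilon(\omega)\,\langle\widehat\psi(\omega),\widehat\phi(\omega)\rangle_{\C^2}\,\dif{\omega} ,
\]
which matches the definition exactly once the $f_{i,m}$ are orthonormal. The eigenvalues of $T$ are $\tfrac12(1\mp\tanh\vartheta)=n_F,\,1-n_F$. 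Hence $\S^{kl}_m(\omega)$ has eigenvalues $2\epsilon n_F$ and $2\epsilon(1-n_F)$ and norm $2\max\{n_F,1-n_F\}\le 2$, which is precisely the paper's computation.
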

\begin{proof}
    Firstly, we notice after Proposition~\ref{prp:signatureOp} that the signature operator is a 
    multiplication operator in $\omega$ and thus commutes with the Hamiltonian. Furthermore, $\S^
    {kl}_m$ is a weighted spectral decomposition along the fundamental solutions $\Psi^{kl}_{m,i}$ 
    (more precisely, one can identify the scalar product $(\Psi^{kl}(\omega)_i | \cdot)$ as 
    spectral transformation). Using the Plancherel identity we get that $\S^{kl}_m$ is a 
    multiplication operator in its spectral representation. Therefore, when looking at the integral 
    kernel~\eqref{eq:kernle_s} we just need to compute the eigenvalues of the matrix $T$ from 
    Lemma~\ref{le:integralRep}
    \[
        T = \dfrac{1}{2}\begin{bmatrix}
            1 & e^{i(\beta - \gamma)}\tanh(\vartheta) \\
            e^{-i(\beta - \gamma)}\tanh(\vartheta) & 1
        \end{bmatrix} \, .
    \]
    It has the eigenvalues
    \begin{align}
        \lambda_{1,2} = \dfrac{1}{2}(1 \mp \tanh(\vartheta))\, ,
    \end{align}
    where $\vartheta(\omega)$ is defined in the discussion above the integral representation in 
    equation~\eqref{eq:vartheta}. Knowing, the eigenvalues we can compute the operator norm
    \begin{align}\label{boundNorm}
        \|\S^{kl}_m\|_{\rm{op}} = 2 \max\{|\lambda_+|, |\lambda_-|\} \leq 2 \: ,
    \end{align}
    which is uniformly in $k, l$ and $\omega$. Finally, evaluating the remaining part of the kernel 
    and using
    \[
        \dfrac{1}{2}\big(1-\tanh(\vartheta)\big) = \dfrac{1}{e^{2\vartheta} + 1} =: n_F(\vartheta)
    \]
    gives the desired result.
\end{proof}
Lastly, since $\S_m$ is uniformly bounded in $l$ and $\omega$ it has a joint spectral decomposition 
with the angular operator $\mathcal{A}$.
\begin{Corollary}\label{cor:spec_decomp_sig}
    The fermionic signature operator has a joint spectral decomposition with $\mathcal{A}$ and can be extended to an operator on~$\hh^\infty_m$,
    \begin{align}\label{eq:full_sig}
    \S_m = \sum_{k,l} \S_m^{kl} \, A_{k,l} : \hh^\infty_m \longrightarrow \hh^\infty_m \, ,
\end{align}
    where $S^{kl}_m$ is the fermionic signature operator for fixed angular modes $(k,l)$ from 
    Definition~\ref{def:sig_op_def} and $A_{k,l}$ the spectral measure from equation~\eqref
    {eq:specdecomp_angular}.
\end{Corollary}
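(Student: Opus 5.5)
The plan is to assemble $\S_m$ mode by mode from the operators $\S^{kl}_m$ of Proposition~\ref{prp:signatureOp}, using the orthogonal decomposition of $\hh_m$ into angular sectors. First I would record that the spectral projections $A_{k,l}$ of $\mathcal{A}$ from \eqref{eq:specdecomp_angular} act on Dirac solutions by projecting onto the angular factor ${}_{\frac12}Y_{kl}$ in the separation ansatz \eqref{sepAnsatz}. They are orthogonal with respect to the conserved scalar product, because Definition~\ref{conservedSP} factorizes into a radial integral with weight $\Sigma(r)$ times the $L^2(S^2)$ pairing, and \eqref{angularON} holds. Hence $\hh_m=\bigoplus_{k,l}A_{k,l}\hh_m$, and by Lemma~\ref{scalLemma} this decomposition is compatible with the $\omega$-representation $\psi\mapsto(\widehat\psi^{kl}_{i,m})$.

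Next I would check that $\S^{kl}_m$ maps $A_{k,l}\hh^\infty_m$ into itself. By \eqref{eq:kernle_s}, $\S^{kl}_m$ only multiplies the coefficients $\widehat\psi^{kl}_{i,m}(\omega)$ by the matrix $2\epsilon(\omega)\Idmat_{|\omega|>m}T(\omega)$. This keeps the angular mode fixed. Since the coefficients in Definition~\ref{subsetH} are supported away from $\omega=\pm m$, the cutoff $\Idmat_{|\omega|>m}$ preserves smoothness and compact support. Here one needs $T$ to be smooth in $\omega$, which follows from the smoothness of $\alpha,\beta,\vartheta$ inherited from the Jost solutions. For $\psi\in\hh^\infty_m$ only finitely many modes are nonzero, so the sum in \eqref{eq:full_sig} is finite on $\hh^\infty_m$. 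Therefore $\S_m$ is well defined, maps $\hh^\infty_m$ into $\hh^\infty_m$, and commutes with every $A_{k,l}$. That gives the joint spectral decomposition with $\mathcal{A}$, and together with Corollary~\ref{cor:props_s} also with $H$.

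Finally, for the extension I would use the uniform bound $\|\S^{kl}_m\|_{\rm op}\le 2$ from Corollary~\ref{cor:props_s}, which is uniform in $(k,l,\omega)$. Orthogonality of the sectors then gives
\[
\|\S_m\psi\|^2=\sum_{k,l}\|\S^{kl}_mA_{k,l}\psi\|^2\le 4\sum_{k,l}\|A_{k,l}\psi\|^2=4\|\psi\|^2 .
\]
So $\S_m$ is bounded with norm at most $2$ and extends by continuity from the dense set $\hh^\infty_m$ to $\hh_m$. Symmetry follows sector by sector from the symmetry of each $\S^{kl}_m$.

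The main obstacle I expect is the first step: justifying that the conserved scalar product really decouples across angular modes. This requires that the $\Sigma(r)$-weighted product and the $2\times2$ block reduction used in Lemma~\ref{le:integralRep} commute with the angular projections. I would verify this directly from the block-diagonal form of $\Sigma$ and the orthonormality \eqref{angularON}.
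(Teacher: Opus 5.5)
Your proposal is correct and follows essentially the paper's argument: split $\hh_m$ into the angular sectors $A_{k,l}\hh_m$, note that each $\S^{kl}_m$ preserves its sector, and combine the uniform bound $\|\S^{kl}_m\|_{\mathrm{op}}\le 2$ from Corollary~\ref{cor:props_s} with the orthogonality of the $A_{k,l}$. That orthogonality, which you flag as the main obstacle, is in fact immediate from the mode-wise form of Lemma~\ref{scalLemma}. You also supply the finiteness of the mode sum and the invariance of $\hh^\infty_m$, which the paper leaves implicit. For the invariance, smoothness of $T$ is more safely argued via $t=\overline{a}/\overline{b}$ with $|b|=\cosh\vartheta\ge 1$ than via $\alpha$ and $\vartheta$ separately, since these need not be smooth where $\vartheta=0$. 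Your constant $4\|\psi\|^2$ is the correct one; the paper's final bound $\|\psi\|^2$ is a slip.
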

\begin{proof}
    This follows directly from the fact, that the full Dirac propagator commutes with $\mathcal{A}$ 
    in a Reissner-Nordström geometry. Since, $\S_m$ has a spectral decomposition with $H$ we can 
    write
    \[
        \S_m = \sum_{k,l}\int_{|\omega| > m} \S^{kl}_m(\omega) \, \dif{E}_\omega \, A_{k,l} = \sum_
        {k,l} \S^{kl}_m \, A_{k,l} \, . 
    \]
    Because, we know from Corollary~\ref{cor:props_s} that $\|\S^{kl}_m\| \leq 2$, we get for the 
    full operator norm for a $\psi \in \hh_m$
    \[
        \Big\|\sum_{k,l} \S^{kl}_m A_{k,l} \psi\Big\|^2 \leq \sup_{k,l}\|\S^{kl}_m\|^2 \sum_{k,l}\|
        A_{k,l} \psi\|^2 \leq \|\psi\|^2 \, ,
    \]
    where we used in the last step the orthogonality of $A_{k,l}$.
\end{proof}
Note that this already follows abstractly from the fact that $\S_m$ respects the sym\-
metries of spacetime as worked out in detail in~\cite{sigsymm}.

\subsection{Construction of Quasi-free States}
Araki's construction of quasi-free states~\cite{araki1970quasifree} ensures that for any non-negative 
operator $W$ on $\hh^\infty_m$ satisfying $0 \leq W \leq 1$ and $W + F^* S F = 1$, where $F$ is an 
anti-unitary involution on $\hh^\infty_m$, there exists a unique quasi-free Dirac state whose 
two-point distribution is given by the integral kernel of $-W k_m$.

More generally, Araki's theorem can also be applied to arbitrary operators $W' = f(W)$, where $f$ 
is a non-negative Borel function. This allows one to construct an entire class of quasi-free states 
from $W$, each with different physical properties. In the following, we will use the fermionic 
signature operator to construct such quasi-free states. We emphasize that the
construction of this class of states is 
fully covariant, in the sense that the fermionic signature operator is defined in a manifestly 
covariant manner.

One possible state for the black hole is the fermionic projector state, which describes quasi-free 
Dirac particles; its definition, worked out in detail in~\cite{hadamard}, is recalled briefly 
below. By Corollary~\ref{cor:props_s}, $\S_m$ is a bounded symmetric operator on 
$\hh^\infty_m$, allowing us to define
\begin{Def}\label{projectorStates}
    The fermionic projector $P: C_0^\infty(\mm, S\mm) \longrightarrow \hh^\infty_m$ is given by
    \begin{align}
    P := -\Idmat_{(-\infty,0)}(\S_m)k_m,
    \end{align}
    where $k_m$ is the causal fundamental solution defined as
        \[
            k_m := \dfrac{1}{2\pi i}(s_m^\vee - s_m^\wedge) : C_0^\infty(\mm, S\mm) \longrightarrow 
            \hh_m^\infty \, .
        \]
\end{Def}
To obtain a Hadamard state, one requires a frequency splitting for an observer at infinity. This 
was demonstrated in~\cite{sigbh} for the exterior Schwarzschild geometry, and we extend the 
argument to the Reissner-Nordström geometry with horizon-penetrating coordinates.
\begin{Corollary}\label{projHad}
    The pure quasi-free fermionic projector state associated with the signature operator $\S_m$
	in the Reissner-Nordström geometry expressed in horizon-pene\-tra\-ting coordinates satisfies the 
    Hadamard condition.
\end{Corollary}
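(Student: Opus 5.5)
The plan is to reduce the Hadamard property to a frequency-splitting statement, following the strategy of~\cite{sigbh} (and ultimately~\cite{hadamard}). I would use the characterization from~\cite{hadamard}: a quasi-free state whose two-point distribution is $-\Idmat_{(-\infty,0)}(\S_m)\,k_m$ is Hadamard if the negative spectral projection of $\S_m$ differs from the projection onto negative frequencies, with respect to a suitable timelike Killing field, by an operator with smooth integral kernel. In the stationary exterior region, the negative-frequency state $\Idmat_{(-\infty,0)}(H)\,k_m$ for the Killing field $\partial_\tau$ is known to be Hadamard; this is the standard passive/ground-state argument for static observers at infinity, as used in~\cite{sigbh}. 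The goal is therefore to show that $P$ coincides with this frequency-split state, up to smooth corrections.

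The key step is the spectral description of $\S_m$ from Corollary~\ref{cor:props_s} and Corollary~\ref{cor:spec_decomp_sig}. For each mode $(k,l)$ and each $|\omega|>m$, the eigenvalues of $\S^{kl}_m(\omega)$ are $2\epsilon(\omega)n_F$ and $2\epsilon(\omega)(1-n_F)$. Both factors are strictly positive, since $0<n_F<1$ for finite $\vartheta$. Hence the sign of every eigenvalue equals $\epsilon(\omega)$. For $|\omega|\le m$ the operator vanishes, and by Definition~\ref{subsetH} this set carries no support in $\hh^\infty_m$; moreover $H$ has no point spectrum there, by~\cite{finster-krpoun1}. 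Consequently
\[
\Idmat_{(-\infty,0)}(\S_m)=\sum_{k,l}\int_{\omega<-m}\dif{E}_\omega\,A_{k,l}=\Idmat_{(-\infty,0)}(H),
\]
as a consequence of the joint spectral decomposition with $\mathcal{A}$. So $P$ is exactly the negative-frequency projection of $k_m$ with respect to $\tau$. It is pure because it comes from a projection.

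It then remains to argue that this frequency splitting yields a Hadamard state in the horizon-penetrating setting. Here I would compare the $\tau$-splitting with the Schwarzschild-type $t$-splitting of~\cite{sigbh}. Since $\dif{\tau}=\dif{t}+\dif{u}$, both are generated by the same Killing field $K=\partial_\tau=\partial_t$ in the exterior. Near infinity the two splittings therefore agree, and the microlocal spectrum condition follows there from the standard result for ground states of stationary spacetimes with timelike Killing field. Away from infinity, I would propagate the wave front set condition along null geodesics using the propagation of singularities for $k_m$: the two-point distribution is a bisolution, so it suffices to verify the condition near one Cauchy surface reaching the asymptotic region.

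I expect the main obstacle to be the regions where $\partial_\tau$ is not timelike, namely the interior $r_-<r<r_+$ and the horizon itself. There the positive-frequency argument does not apply directly. I would try to handle this by propagation of singularities: every null geodesic in $(r_-,\infty)$ passing through the interior can be followed backward to the exterior region, where the spectral condition has already been established. This argument must be checked carefully near the event horizon and at the bifurcation behaviour of the coordinates, and I expect that is where most of the work lies.
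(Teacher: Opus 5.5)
Your reduction is the one the paper uses. Its proof consists only of reading off the signs of the eigenvalues from Corollary~\ref{cor:props_s}, identifying the fermionic projector state with the frequency-splitting state for an observer at infinity, and invoking the Hadamard property of that state as known.

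In making this identification, however, you assert something false. Definition~\ref{subsetH} removes only the two points $\omega=\pm m$, not the band $|\omega|<m$. Lemma~\ref{le:integralRep} contains a genuine contribution from $|\omega|<m$, with $T=\tfrac12\diag(q,0)$, carried by the solution that decays at infinity. The absence of point spectrum does not exclude this continuous spectrum, and the corresponding spectral subspace is precisely the kernel of $\S_m$. Hence
$\Idmat_{(-\infty,0)}(\S_m)=\Idmat_{(-\infty,-m)}(H)\neq\Idmat_{(-\infty,0)}(H)$,
so $P$ is not the $\partial_\tau$-ground state to which the passivity argument you quote applies.

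On the exterior this can be repaired. $\Idmat_{[-m,0)}(H)\,k_m$ is a stationary bisolution with bounded $\tau$-frequency. Its wave front set therefore consists of pairs of null (or zero) covectors annihilating $\partial_\tau$, and this set is empty over points where $\partial_\tau$ is timelike. At $r=r_+$ and for $r_-<r<r_+$, where $\partial_\tau$ is null resp.\ spacelike, this argument gives nothing.

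The more serious gap is the step you flag yourself, and the route you propose for it does not work. From~\eqref{RSmetric}, the radial null directions satisfy $d\tau/dr=-1$ or $d\tau/dr=2r^2/\Delta-1$. For $r_-<r<r_+$, follow the second family to the past, i.e.\ towards increasing $r$. Along it $\tau\to-\infty$ as $r\nearrow r_+$, so it never reaches $r\geq r_+$. In Figure~\ref{fig:penrose_diag} these are the rays entering the region between the horizons through the horizon component adjacent to the \emph{other} exterior region.

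So it is false that every null geodesic through the interior can be traced back to the exterior. Verifying the microlocal spectrum condition ``near a Cauchy surface'' therefore requires control near the part of $N_\tau$ inside the black hole, where no ground-state or passivity argument is available. This is not a mere technicality. At fixed Killing frequency, the modes of Lemma~\ref{le:krpMul}\,(ii) contain the factor $e^{2i\omega u}$ with $u\to-\infty$ at $r_+$. A splitting by the sign of $\omega$ is thus not a splitting by frequency with respect to a parameter that is regular across the horizon, which is the mechanism behind the horizon singularities of Boulware-type states.

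Even after the correction above, your argument yields the Hadamard condition only on $r>r_+$. The paper's proof does not supply the missing step either; it simply asserts that the frequency-splitting state is known to be Hadamard.
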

\begin{proof}
    By Corollary~\ref{cor:props_s}, the eigenvalues of $\S_m$ are negative if and only if 
    $\omega$ is negative. Therefore, the fermionic projector state coincides
    with the state obtained by frequency splitting for an observer at infinity.
    This quasi-free state is known to be of Hadamard form.
%    
%    Moreover, $\S_m$ vanishes for $\omega \in [-m,m]$, and by Corollary~\ref
%    {cor:ew_he} the eigenvalues for high energies coincide with those of the Minkowski Dirac 
%    Hamiltonian up to super-polynomial decaying error terms. Consequently, the negative-frequency 
%    subspace of $\S_m$ coincides with that of the Hamiltonian and differs only by smooth 
%    contributions in $\omega$:
%    \[
%        \Idmat_{(-\infty,-m]}(H) - \Idmat_{(-\infty,0)}(\S_m)  = \: (\text{smooth 
%        contributions})\, .
%    \]
%    Thus, the quasi-free fermionic projector state is Hadamard.
\end{proof}

\section{The Fermionic Flux Operator}\label{section5}
In this last section, we want to discuss the properties of the fermionic flux operator. As 
discussed in the previous Remark~\ref{rmk:flux}, $\mathfrak{B}(m,m')$ can only be interpreted as a 
probability flux through the horizons when performing the limit $m' \longrightarrow m$. In this 
setting, as an operator acting on elements of $\hh^\infty_m$, it describes the part of the wave 
that enters the black hole and disappears from the exterior causal region.

We start with deriving from the Definition~\ref{def:flux_op} the explicit form of the fermionic flux 
operator.
\begin{Prp}\label{prp:fluxOp}
    The fermionic flux operator $\mathfrak{B}^{kl}_m$ can be expressed for fixed angular momentum 
    modes $(k,l)$ and for all $\psi \in \hh^\infty_m$ as follows:
    \begin{align}\label{radialFLuxOp}
        \mathfrak{B}^{kl}_m = - \dfrac{1}{2\pi} \int_{|\omega|>m} \dif{\omega} \sum_{i,j=1}^2 \, 
        M_{ij}(\omega) \Scpr{\Psi^{kl}_{j,m}(\omega)}{\, \cdot}_m \, \Psi^{kl}_{m,i}(\omega,r, 
        \vartheta, \varphi),
    \end{align}
    and its kernel has the form
    \begin{align}\label{eq:kernel_b}
        (\mathfrak{B}\phi)^{kl}_{i,m} = -\Idmat_{|\omega| > m}(\omega) \sum_{a,b=1}^2 
        t_{i,a} s_a\,\widehat{\phi}^{kl}_{b,m}(\omega) \, ,
    \end{align}
    with 
    \[
        M^{kl}_{ij}(\omega) = \sum_{a}^1 t_{ia}\, s_a \,t_{aj} \, ,
    \]
    where $s_1 = 1$ and $s_2 = -1$.
\end{Prp}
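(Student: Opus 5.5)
The proof mirrors that of Proposition~\ref{prp:signatureOp}, with the boundary matrix $\Omega(\omega,m)$ in place of the Gram matrix of transmission vectors. The first step is to compute $\Omega(\omega,m) := \lim_{m'\to m}\Omega(\omega,m,m')$ explicitly.

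By Lemma~\ref{le:non_singular}, $\Omega(\omega,m,m')$ is the combination $W(r^>_+) - W(r^<_+) - W(r_-)$ of Wronskians. For $m=m'$ each Wronskian is $r$-independent on its region (Definition~\ref{def:wronskian}). Since $r^<_+$ and $r_-$ lie in the same region $(r_-,r_+)$, these two terms cancel, exactly as in the proof of Theorem~\ref{thm:massDecompTheo}. Only the exterior Wronskian survives. I will evaluate it at spatial infinity using the asymptotics of Lemma~\ref{le:krpMul}(i) and the normalization of $\chi_{1,m},\chi_{2,m}$ through the Jost solutions, whose transmission vectors are $(1,0)^T$ and $(0,1)^T$. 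With the signature $(1,-1)$ this gives
\[
\Omega(\omega,m)_{ij} = s_i\,\delta_{ij}\, ,\qquad s_1=1,\ s_2=-1\, ,
\]
for $|\omega|>m$.

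For $|\omega|<m$ only the $(1,1)$-entry is relevant, because of the form of $T$. That entry vanishes by~\eqref{eq:zeroWronskian}, since $\chi_{1,m}$ decays exponentially. This explains the indicator $\Idmat_{|\omega|>m}$ in the statement.

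Next I compare Definition~\ref{def:flux_op} with Lemma~\ref{scalLemma} applied to $\phi$ and $\mathfrak{B}\phi$:
\[
2\pi\int_\R \sum_{i,j}(T^{-1})^{ij}\,\overline{\widehat\psi_{i,m}}\,\widehat{(\mathfrak{B}\phi)}_{j,m}\,\dif{\omega}
= -2\pi \int_{|\omega|>m}\sum_{i}\overline{\widehat\psi_{i,m}}\, s_i\,\widehat\phi_{i,m}\,\dif{\omega}\, .
\]
The $\widehat\psi_{i,m}$ are arbitrary smooth compactly supported functions in $\hh^\infty_m$, so the integrands must agree pointwise in $\omega$. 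This yields $T^{-1}\widehat{(\mathfrak{B}\phi)} = -\Idmat_{|\omega|>m}\,S\,\widehat\phi$ with $S=\diag(s_1,s_2)$, that is,
\[
\widehat{(\mathfrak{B}\phi)} = -\Idmat_{|\omega|>m}\,T S\,\widehat\phi\, .
\]
This is the kernel~\eqref{eq:kernel_b}, where the sum over $b$ collapses against $\delta_{ab}$. Here I use that $T$ is hermitian and invertible for $|\omega|>m$, which holds because $|t|=\tanh\vartheta<1$.

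Finally, I insert this kernel into the integral representation of Lemma~\ref{le:integralRep}. I replace $\widehat\phi_{a,m}$ by $\frac{1}{2\pi}\sum_j t_{aj}\Scpr{\Psi_{j,m}(\omega)}{\phi}$ and collect the coefficients, which gives $(TST)_{ij}=M_{ij}$ and hence formula~\eqref{radialFLuxOp} with prefactor $-\frac{1}{2\pi}$.

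The main obstacle is the first step. One has to justify the limit $m'\to m$ inside $\Omega$ uniformly in $\omega$ on compact sets away from $\pm m$, and check that the limits $r\searrow r_-$ and $r\nearrow r_+$ of the interior Wronskians, built from the boundary conditions~\eqref{eq:bc_phi2}, are finite and equal, so that their cancellation is legitimate. I would handle this via the asymptotics of Lemma~\ref{le:krpMul}(ii),(iii) and the smoothness of the $\chi_{i,m}$ in $m$. The rest is linear algebra with the explicit $T$.
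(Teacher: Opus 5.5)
Your proposal is correct and is essentially the paper's proof: you reduce to $|\omega|>m$ via the form of $T$ and \eqref{eq:zeroWronskian}, read off the kernel by comparing Definition~\ref{def:flux_op} with Lemma~\ref{scalLemma}, insert it into Lemma~\ref{le:integralRep} to obtain $M=T\Omega T$, and use that $\Omega(\omega,m)$ reduces to the exterior Wronskian $\diag(1,-1)$; the only difference is that the paper substitutes $\Omega=\diag(1,-1)$ at the end rather than at the start. One caveat on your first step: the interior terms cancel only if the combination is $W(r^>_+)-W(r^<_+)+W(r_-)$, which is what the boundary term $[W]_{r_-}^{r_+}$ in the proof of Lemma~\ref{le:non_singular} actually produces, whereas with the printed sign $-W(r_-)$ that you took over they would add up to $-2W_{\mathrm{int}}\neq 0$ (on $(r_-,r_+)$ the matrix $A^r$ is the identity, so the diagonal interior Wronskians are strictly positive).
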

\begin{proof}
    We start from Definition~\ref{def:flux_op}. Firstly, note that for $|\omega| < m$ only the 
    contributions with $i=j=1$ do not vanish, according to lemma~\ref{le:integralRep}. Secondly, 
    recall from the proof of Theorem~\ref{thm:massDecompTheo} that $\Omega(\omega,m)_{11} = 0$, as 
    stated in~\eqref{eq:zeroWronskian}. Hence, we obtain
    \[
        \Scpr{\psi_m}{\mathfrak{B}^{kl}_m\phi_m} := -2\pi \int_{|\omega| > m}\dif{\omega} \sum_{i,j 
        =1}^2 \overline{\widehat{\psi}^{kl}_{i,m} (\omega)}\, \widehat{\phi}^{kl}_{j,m}(\omega) \, 
        \Omega^{kl}(\omega,m)_{ij}.
    \]
    We proceed with the same steps as for the fermionic signature operator. Inserting the identity 
    relation, we obtain
    \begin{align}
        \Scpr{\psi_m}{\mathfrak{B}^{kl}_m\phi_m} := -2 \pi \int_{|\omega| > m}\dif{\omega} &\sum_{i,
        j = 1}^2(T^{-1})^{ij} \,\overline{\widehat{\psi}^{kl}_{i,m}(\omega)} \notag \\
        &\times \sum_{a,b=1}^2 t_{ja}\, \Omega^{kl}(\omega,m)_{ab}\, \widehat{\phi}^{kl}_{b,m}(\omega)
    \end{align}
    Comparing this relation with Lemma~\ref{scalLemma} yields the integral kernel of the fermionic 
    flux operator. Using this kernel together with Lemma~\ref{le:integralRep}, we find
    \begin{align}
        (\mathfrak{B}&\phi)^{kl}_m(r,\, \vartheta, \, \varphi) = \sum_{a=1}^2 (\mathfrak{B}\phi)^
        {kl}_{a,m}(\omega) \Psi^{kl}_{a,m}(\omega, r, \vartheta, \varphi) \notag \\
        &= - \dfrac{1}{2\pi}\int_{|\omega| > m} \dif{\omega}   \sum_{i,a,b,j} t_{ia} \,\Omega^{kl}(\omega,m)_{ab}\, t_{bj} \Scpr{\Psi^{kl}_{j,m}(\omega)}
        {\phi_m}_m \Psi^{kl}_{i,m}(\omega, r,\vartheta,\varphi)\notag \\
        &=- \dfrac{1}{2\pi} \int_{|\omega|>m} \dif{\omega} \sum_{i,j=1}^2 \,M^{kl}_{ij}(\omega) 
        \Scpr{\Psi^{kl}_{j,m}(\omega)}{\psi_m}_m\,\Psi^{kl}_{m,i}(\omega, r, \vartheta, \varphi)\, ,
    \end{align}
    with 
    \[
        M^{kl}_{ij}(\omega) = \sum_{a,b=1}^1 t_{ia}\, \Omega^{kl}(\omega,m)_{ab} \, t_{bj} \, .
    \]
    Using the fact that $\Omega^{kl}(\omega,m)$ equals the Wronskian at the exterior event horizon 
    and that the $f^{kl\omega}_{i,m}$ are orthogonal, completes the proof.
\end{proof}
Next, we derive the explicit form of the eigenvalues and other properties of $\mathfrak{B}^{kl}_m$. 
We want to highlight that these results depend on the evolution of the wave function in the 
whole interval $(r_-, \infty)$ and takes contributions from the interior of the black hole into 
account.
\begin{figure}[t]
    \centering
    \resizebox{0.6\textwidth}{!}{
    \begin{tikzpicture}[scale=1.55, font=\small]
        \begin{scope}[shift={(0,0)}]  % reset

            % Region I  (exterior right)
            \coordinate (cI)    at ( 1, 0);
            \coordinate (I-r)   at ( 2, 0);   % i^0
            \coordinate (I-t)   at ( 1, 1);   % future on r_+
            \coordinate (I-l)   at ( 0, 0);   % past junction
            \coordinate (I-b)   at ( 1,-1);   % i^- side  (only partially shown)

            % Region IV (exterior left, mirror)
            \coordinate (cIV)   at (-1, 0);
            \coordinate (IV-r)  at ( 0, 0);
            \coordinate (IV-t)  at (-1, 1);
            \coordinate (IV-l)  at (-2, 0);
            \coordinate (IV-b)  at (-1,-1);

            % Region II (between horizons)
            \coordinate (cII)   at ( 0, 1);
            \coordinate (II-r)  at ( 1, 1);
            \coordinate (II-t)  at ( 0, 2);
            \coordinate (II-l)  at (-1, 1);
            \coordinate (II-b)  at ( 0, 0);

            % Region II (between horizons, after interior)
            \coordinate (cVI)   at ( 0, 3);
            \coordinate (VI-r)  at ( 1, 3);
            \coordinate (VI-t)  at ( 0, 4);
            \coordinate (VI-l)  at (-1, 3);
            \coordinate (VI-b)  at ( 0, 2);

            % Region III (inner, r < r_-) (triangle right)
            \coordinate (cIII)  at ( 0.5, 2);
            \coordinate (III-r) at ( 1, 3);
            \coordinate (III-t) at ( 1, 3);
            \coordinate (III-l) at ( 0, 2);
            \coordinate (III-b) at ( 1, 1);

            % Region III (inner, r < r_-) (triangle left, mirror)
            \coordinate (cV)  at ( -0.5, 2);
            \coordinate (V-t) at ( -1, 3);
            \coordinate (V-r) at ( 0, 2);
            \coordinate (V-b) at ( -1, 1);

            % Special curve points
            \coordinate (int-p) at (0.5, 0.5);
            \coordinate (s-cauchy) at (1, 0.09);
            \coordinate (e-cauchy) at (0.8, 0.1);

            % ==================================================================
            %  FILLS  (paint regions before drawing lines)
            % ==================================================================

            % Region I
            \fill[fill-I]
            (I-r) -- (I-t) -- (I-l) -- (I-b) -- cycle;

            % Region IV (mirror)
            \fill[fill-IV]
            (IV-r) -- (IV-t) -- (IV-l) -- (IV-b) -- cycle;

            % Region II
            \fill[fill-II]
            (II-r) -- (II-t) -- (II-l) -- (II-b) -- cycle;

            % Region VI (mirror)
            \fill[fill-II]
            (VI-r) -- (VI-t) -- (VI-l) -- (VI-b) -- cycle;

            % Region III
            \fill[fill-III]
            (III-t) -- (III-l) -- (III-b) -- cycle;

            % Region V (mirror)
            \fill[fill-III]
            (V-t) -- (V-r) -- (V-b) -- cycle;

            % ==================================================================
            %  SINGULARITIES  r = 0 
            % ==================================================================

            \draw[singularity] (1, 1) -- (1, 3)
            node[right=4pt, midway, red, font=\scriptsize] {$r=0$};
            \draw[singularity] (-1, 1) -- (-1, 3);
            \node[left=4pt, red, font=\scriptsize] at (-1, 2) {$r=0$};

            % ==================================================================
            %  CONFORMAL INFINITY  (scri+, scri-, i0, i+, i-)
            % ==================================================================

            % Right exterior I\begin{document}
            \draw[scri] (I-b) -- (I-r) node[right, inner sep=0pt,
                font=\scriptsize] {$i^0$} -- (I-t);
            % Label scri
            \node[blue!70!black, right=2pt,
                font=\scriptsize] at ($(I-b)!0.5!(I-r)$) {$\mathscr{I}^-$};
            \node[blue!70!black, right=2pt,
                font=\scriptsize] at ($(I-r)!0.5!(I-t)$) {$\mathscr{I}^+$};

            % Left exterior IV
            \draw[scri] (IV-b) -- (IV-l) node[left, inner sep=0pt,
                font=\scriptsize] {$i^0$} -- (IV-t);
            \node[blue!70!black, left=2pt,
                font=\scriptsize] at ($(IV-b)!0.5!(IV-l)$) {$\mathscr{I}^-$};
            \node[blue!70!black, left=2pt,
                font=\scriptsize] at ($(IV-l)!0.5!(IV-t)$) {$\mathscr{I}^+$};

            % Upper sheet I'
            \draw[scri] (III-r) -- (2, 4);
            \draw[scri] (V-t) -- (-2, 4);
            \node[blue!70!black, right=2pt, 
                font=\scriptsize] at ($(III-r)!0.6!(2,4)$) {$\mathscr{I}^+$};
            \node[blue!70!black, left=2pt,
                font=\scriptsize]  at ($(III-l)!0.6!(-2,4)$) {$\mathscr{I}^+$};

            % i± points
            \node[below, inner sep=0pt, font=\scriptsize] at (I-b)  {$i^-$};
            \node[right, font=\scriptsize] at (I-t)  {$i^+$};
            \node[below, inner sep=0pt, font=\scriptsize]  at (IV-b) {$i^-$};
            \node[left, font=\scriptsize]  at (IV-t) {$i^+$};

            % ==================================================================
            %  EVENT HORIZON  r = r_+
            % ==================================================================

            % Future event horizon H+: the two null lines bounding region II from below
            \draw[horizon]
            (II-b) -- (II-r)
            node[midway, below,left=1pt, font=\scriptsize] {$r_+$};
            \draw[horizon]
            (I-l) -- (I-b)
            node[midway, above, font=\scriptsize] {$r_+$};
            \draw[horizon]
            (II-b) -- (II-l)
            node[midway, below, font=\scriptsize] {$r_+$};
            \draw[horizon]
            (IV-r) -- (IV-b)
            node[midway, above, font=\scriptsize] {$r_+$};

            % ==================================================================
            %  CAUCHY HORIZON  r = r_-
            % ==================================================================

            \draw[cauchy-hor] (II-t) -- (III-r)
            node[midway, above, font=\scriptsize] {$r_-$};
            \draw[cauchy-hor]
            (II-t) -- (II-l)
            node[midway, above, font=\scriptsize] {$r_-$};
            \draw[cauchy-hor]
            (V-r) -- (V-t)
            node[midway, above, font=\scriptsize] {$r_-$};
            \draw[cauchy-hor]
            (III-l) -- (III-b)
            node[midway, below, left=1pt, font=\scriptsize] {$r_-$};

            % ==================================================================
            %  REGION LABELS
            % ==================================================================
            % Region labels (overwrite the duplicate gray ones)
            \node[font=\scriptsize] at ( 1,  -0.25)   {\textit{I}};
            \node[font=\scriptsize] at (-1,  -0.25)   {\textit{I}};
            \node[font=\scriptsize] at ( 0,    1)   {\textit{II}};
            \node[font=\scriptsize] at ( 0,    3)   {\textit{II}};
            \node[font=\scriptsize] at ( 0.75,  2)   {\textit{III}};
            \node[font=\scriptsize] at (-0.25,  2)   {\textit{III}};

            % ==================================================================
            % Curves
            % ==================================================================
            % Boundary around singularity
            \draw[thick, black,
                decoration={amplitude=1pt, segment length=5pt},
                decorate]
            (-1, 1)
            .. controls (-0.8, 1.5) and (-0.7, 2) .. 
            (-1, 3)
            node[midway, above=28pt, right, inner sep=0pt, font=\scriptsize] {$\partial \mm$};;

            % Cauchy surface
            \draw[thick, green,
                decoration={amplitude=1pt, segment length=5pt},
                decorate]
            (I-r)
            .. controls (0.125,0.125) .. 
            (-0.95,1.1)
            node[midway, right=5pt, font=\scriptsize]{$N_\tau$};

            % Ineraction Cauchy surface
            \draw[em-wave] (s-cauchy) -- (e-cauchy)
            node[midway, below, font=\scriptsize] {$\Psi$};     % interacting onto Cauchy surface
            \draw[orange!60!red] (s-cauchy) -- (1.3, 0.7);      % one part to future inf
            \draw[orange!60!red] (e-cauchy) -- (-0.88, 2.5);    % other part through CH 
        \end{scope}
    \end{tikzpicture}
    }
    \caption{Maximal analytical extension of a Reissner-Nordström black hole. The green line 
    indicates the constructed Cauchy hypersurfaces $N_\tau$. The surface $\partial \mm$ denotes the 
    boundary constructed to obtain an essential self adjoint Hamiltonian for the spectral analysis. 
    $\Psi$ (orange) denotes an interacting fermionic wave propagating from the Cauchy surface. The 
    event horizon (dark red) lies at $r_+$ and the Cauchy horizon (purple) at $r_-$. Massive waves 
    cannot travel faster than light and therefore have a slope strictly greater $45^\circ$.}
    \label{fig:penrose_diag}
\end{figure}
\begin{Prp}
    The fermionic flux operator $\mathfrak{B}^{kl}_m$, acting on $\hh^\infty_m$ for fixed angular 
    momentum modes $(k,l)$ is a symmetric and bounded operator with operator norm
    \begin{align}
        \|\mathfrak{B}^{kl}_m\|_{\rm{op}} \leq 1
    \end{align}
    and eigenvalues
    \begin{align}
        \lambda_\pm(\mathfrak{B}^{kl}_m) = \mp \sqrt{2 n_F(\vartheta)} \in \R \, ,
    \end{align}
    with $n_F$ the distribution function from the eigenvalues of $\S^{kl}_m$ from Corollary~\ref
    {cor:props_s}.Furthermore, it commutes with the Dirac Hamiltonian $H$ and hence admits a 
    spectral decomposition of the form
    \[
        \mathfrak{B}^{kl}_m = \int_{|\omega| > m} \mathfrak{B}^{kl}_m(\omega) \dif{E}_{\omega} \, .
    \]
\end{Prp}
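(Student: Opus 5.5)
The proof mirrors that of Corollary~\ref{cor:props_s}. By Proposition~\ref{prp:fluxOp}, $\mathfrak{B}^{kl}_m$ has the kernel~\eqref{eq:kernel_b}, which acts on the coefficients $\widehat{\phi}^{kl}_{b,m}(\omega)$ by multiplication with an $\omega$-dependent $2\times 2$ matrix supported on $|\omega|>m$.

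\emph{Commutation with $H$ and spectral decomposition.} The kernel is pointwise in $\omega$. Since $e^{-i\tau H}$ acts on the coefficients of Lemma~\ref{le:integralRep} by multiplication with $e^{-i\omega\tau}$, $\mathfrak{B}^{kl}_m$ commutes with $H$. Identifying $\Scpr{\Psi^{kl}_{j,m}(\omega)}{\cdot}$ as the spectral transformation, as in the proof of Corollary~\ref{cor:props_s}, gives the representation $\mathfrak{B}^{kl}_m=\int_{|\omega|>m}\mathfrak{B}^{kl}_m(\omega)\,\dif{E}_\omega$. The fibre operator $\mathfrak{B}^{kl}_m(\omega)$ is the finite matrix $-T\,\mathrm{diag}(s_1,s_2)$ acting on $\C^2$.

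\emph{Symmetry.} In each fibre the inner product of Lemma~\ref{scalLemma} is $2\pi\,\la\cdot,T^{-1}\cdot\ra_{\C^2}$. The operator $-T\,\mathrm{diag}(1,-1)$ is symmetric with respect to this inner product exactly when $T^{-1}\cdot T\,\mathrm{diag}(1,-1)=\mathrm{diag}(1,-1)$ is Hermitian, which holds trivially. Equivalently, $\Scpr{\psi_m}{\mathfrak{B}^{kl}_m\phi_m}_m=-2\pi\int\overline{\widehat\psi}\,\Omega\,\widehat\phi$ with $\Omega(\omega,m)=\mathrm{diag}(1,-1)$ Hermitian. This uses the statement, from the proof of Proposition~\ref{prp:fluxOp}, that on the diagonal $m=m'$ the Cauchy-horizon and interior event-horizon Wronskians cancel, so only the exterior Wronskian at $r_+$ survives. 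By the normalization of the $\chi_{i,m}$ this Wronskian is $\mathrm{diag}(1,-1)$.

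\emph{Eigenvalues and norm.} With $T=\frac12\begin{pmatrix}1&t\\ \bar t&1\end{pmatrix}$ and $|t|=\tanh\vartheta$, the matrix $T\,\mathrm{diag}(1,-1)=\frac12\begin{pmatrix}1&-t\\ \bar t&-1\end{pmatrix}$ is traceless with determinant $-\frac14(1-\tanh^2\vartheta)$. Its eigenvalues are therefore
\[
\pm\tfrac12\sqrt{1-\tanh^2\vartheta}=\pm\tfrac12\,\mathrm{sech}\,\vartheta ,
\]
which are real.

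The main obstacle is matching this with the claimed form $\mp\sqrt{2n_F}$ under the paper's normalization conventions. I would write $\mathrm{sech}^2\vartheta$ in terms of $n_F=(1+e^{2\vartheta})^{-1}$, using $1-\tanh^2\vartheta=4n_F(1-n_F)$. This gives the eigenvalues $\pm\sqrt{n_F(1-n_F)}$. I would then track the prefactors (the $2\pi$ in Definition~\ref{def:flux_op} against the $2\pi$ in Lemma~\ref{scalLemma}, and the factor $\frac12$ in $T$) carefully to see whether the stated expression $\mp\sqrt{2n_F}$ results, possibly after a different convention for $n_F$. This normalization check is the step I expect to be delicate.

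Whatever the final constant, the norm bound follows from the eigenvalue formula. Because the operator is symmetric in each fibre, its norm there equals the largest modulus of its eigenvalues. That modulus is at most $\frac12\,\mathrm{sech}\,\vartheta\le 1$, uniformly in $\omega$, $k$ and $l$, which gives $\|\mathfrak{B}^{kl}_m\|_{\rm op}\le1$.
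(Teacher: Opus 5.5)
Your route is the paper's route. You read off from Proposition~\ref{prp:fluxOp} that $\mathfrak{B}^{kl}_m$ acts fibrewise in $\omega$ by the matrix $-T\,\mathrm{diag}(1,-1)$. This is the correct reading of \eqref{eq:kernel_b}, obtained by comparing Definition~\ref{def:flux_op} with Lemma~\ref{scalLemma}. You then infer commutation with $H$ and the spectral decomposition, and take the norm from the fibre eigenvalues. Your explicit symmetry check with respect to the fibre inner product $2\pi\langle\cdot,T^{-1}\cdot\rangle_{\C^2}$ goes beyond what the paper writes down. You are also right to use the Hermitian form of $T$ from the proof of Corollary~\ref{cor:props_s}. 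With the literally symmetric $T$ of Lemma~\ref{le:integralRep}, neither that scalar product nor the eigenvalues $\pm\frac12\sqrt{1-t^2}$ would be real.

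The step you leave open, matching your eigenvalues to $\mp\sqrt{2n_F}$, cannot be closed by any bookkeeping.
\begin{itemize}
\item The $2\pi$ in Definition~\ref{def:flux_op} cancels exactly against the $2\pi$ in Lemma~\ref{scalLemma}.
\item The factor $\frac12$ in $T$ is already inside your matrix.
\item The choice of coordinates ($\widehat\phi$ versus $\Scpr{\Psi_j}{\phi}$) does not matter, because $-ST$ is conjugate to $-TS$.
\end{itemize}
Your trace/determinant computation (trace $0$, determinant $-\frac14\operatorname{sech}^2\vartheta$) is correct. The fibre eigenvalues are therefore $\pm\frac12\operatorname{sech}\vartheta=\pm\sqrt{n_F(1-n_F)}$.

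The ratio $\sqrt{2n_F}\big/\sqrt{n_F(1-n_F)}=\sqrt{2/(1-n_F)}$ varies with $\vartheta$, so this is not a constant normalization or a convention for $n_F$; swapping $n_F\leftrightarrow 1-n_F$ fails in the same way. Concretely, at $\vartheta=0$ you have $T=\frac12\Idmat$ and fibre matrix $-\frac12\,\mathrm{diag}(1,-1)$, with eigenvalues $\mp\frac12$, whereas $\sqrt{2n_F(0)}=1$. The paper's proof gets $\pm\sqrt{1-\tanh\vartheta}=\pm\sqrt{2n_F}$ from an unexplained ``short computation''. That value is inconsistent with the kernel of Proposition~\ref{prp:fluxOp}.

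Your argument does establish symmetry, boundedness, commutation with $H$ and the spectral decomposition. It also gives the bound $\|\mathfrak{B}^{kl}_m\|_{\rm op}\le 1$, in fact $\le\frac12$. The eigenvalue formula as stated is not derivable from the paper's definitions. State the eigenvalues as $\pm\frac12\operatorname{sech}\vartheta$ rather than trying to force agreement.
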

\begin{proof}
    Looking at the representation of $\mathfrak{B}^{kl}_m$ in Proposition~\ref{prp:fluxOp}, we see 
    that it acts as a multiplication operator in $\omega$ (we can interpret this as another 
    spectral decomposition along the fundamental solutions, as for the signature operator). 
    Therefore, it commutes with $H$. 
    
    To obtain the eigenvalues, we need to diagonalize the operator in~\eqref{eq:kernel_b}. A short 
    computation gives
    \[
        \lambda(\mathfrak{B}^{kl}_m)_\pm = \pm \sqrt{1 - \tanh(\vartheta)} \, .
    \]
    Using the maximal eigenvalue for the operator norm gives the result for the uniform bound.
\end{proof}
Next, because $\mathfrak{B}^{kl}_m$ is uniformly bounded in $l$ and $\omega$, it as a joint 
spectral decomposition with $\mathcal{A}$.
\begin{Thm}
    The fermionic flux operator admits a joint spectral decomposition with $\mathcal{A}$ and acts 
    on the full $\hh_m^\infty$:
    \[
        \mathfrak{B}_m = \sum_{k,l}\mathfrak{B}_m^{kl}\, A_{k,l} : \hh^\infty_m \longrightarrow 
        \hh^\infty_m \, ,
    \]
    with $A_{k,l}$ the spectral measure from equation~\eqref{eq:specdecomp_angular}.
\end{Thm}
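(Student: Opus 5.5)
The plan is to mirror the argument of Corollary~\ref{cor:spec_decomp_sig}, replacing $\S^{kl}_m$ by $\mathfrak{B}^{kl}_m$ and the bound $2$ by the bound $1$ from the preceding proposition.

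First we check that $\mathfrak{B}_m$ respects the angular decomposition. The Dirac propagator in the Reissner--Nordström geometry commutes with the angular operator $\mathcal{A}$, so by Lemma~\ref{le:integralRep} every $\psi\in\hh^\infty_m$ decomposes as $\psi=\sum_{k,l}A_{k,l}\psi$. The sum is finite by Definition~\ref{subsetH}(1), and the summands are mutually orthogonal with respect to $\Scpr{\cdot}{\cdot}_m$; this orthogonality follows from Lemma~\ref{scalLemma} together with~\eqref{angularON}.

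Next we use the form of the flux term. By Definition~\ref{def:flux_op} and the mode-wise form of $\Omega^{kl}$ in Lemma~\ref{le:non_singular}, the bilinear form in~\eqref{massDecompReg} only pairs coefficients with the same $(k,l)$. Hence the flux operator acting on $A_{k,l}\psi$ is exactly $\mathfrak{B}^{kl}_m A_{k,l}\psi$, which lies again in the $(k,l)$ eigenspace of $\mathcal{A}$. This gives the identity $\mathfrak{B}_m=\sum_{k,l}\mathfrak{B}^{kl}_mA_{k,l}$ on $\hh^\infty_m$, and it also shows that $\mathfrak{B}_m$ commutes with every $A_{k,l}$.

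Combined with the spectral decomposition in $\omega$ from the preceding proposition, we obtain the joint form
\[
\mathfrak{B}_m=\sum_{k,l}\int_{|\omega|>m}\mathfrak{B}^{kl}_m(\omega)\,\dif{E}_\omega\,A_{k,l}.
\]
This is legitimate because $H$ and $\mathcal{A}$ commute.

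It remains to show the operator is well defined and to extend it to all of $\hh_m$. Let $\psi$ be an element of $\hh_m$, approximated by elements of $\hh^\infty_m$. Using orthogonality of the $A_{k,l}$ and the uniform bound $\|\mathfrak{B}^{kl}_m\|_{\rm op}\le 1$ from the preceding proposition, we estimate
\[
\Big\|\sum_{k,l}\mathfrak{B}^{kl}_mA_{k,l}\psi\Big\|^2=\sum_{k,l}\|\mathfrak{B}^{kl}_mA_{k,l}\psi\|^2\le\sup_{k,l}\|\mathfrak{B}^{kl}_m\|^2_{\rm op}\sum_{k,l}\|A_{k,l}\psi\|^2\le\|\psi\|^2 .
\]
The first equality uses that each $\mathfrak{B}^{kl}_m$ preserves its own eigenspace. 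The estimate shows that the series converges strongly and defines a bounded operator of norm at most $1$.

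Symmetry of $\mathfrak{B}_m$ follows termwise from the symmetry of each $\mathfrak{B}^{kl}_m$ together with the orthogonality of the decomposition.

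The main obstacle is the \emph{uniformity} of the bound in $(k,l)$ and $\omega$. The eigenvalues of $\mathfrak{B}^{kl}_m(\omega)$ are given by $n_F(\vartheta)$ with $\vartheta=\vartheta(\omega,k,l)$. These eigenvalues lie in $[0,1]$, since $\tanh\vartheta\in(-1,1)$ and $n_F$ takes values in $(0,1)$. The estimate therefore does not depend on $\vartheta$ itself, only on this structural range.

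I would verify this carefully from the $2\times2$ kernel~\eqref{eq:kernel_b}, where $M^{kl}=T\,\mathrm{diag}(1,-1)\,T$. The entries of $T$ are bounded because $|t|=|\tanh\vartheta|<1$. Once this is checked, the remaining steps are routine.
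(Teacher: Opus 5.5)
Your proposal is correct and follows the paper's own route. The paper proves this theorem by repeating the proof of Corollary~\ref{cor:spec_decomp_sig}: commutation of the propagator with $\mathcal{A}$, orthogonality of the $A_{k,l}$, and the uniform mode-wise bound $\|\mathfrak{B}^{kl}_m\|_{\rm op}\le 1$ from the preceding proposition, which is exactly what you do.

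Your aside on uniformity is off, though your proof does not need it. The preceding proposition gives the eigenvalues as $\mp\sqrt{2n_F(\vartheta)}$, not $n_F(\vartheta)$. Their bound by $1$ rests on $\vartheta\ge 0$ from~\eqref{eq:vartheta}, not merely on $n_F\in(0,1)$. Also, entrywise bounds on $T$ alone would not control the norm, because the $\mathcal{H}_m$-inner product in coefficient space carries the weight $T^{-1}$ from Lemma~\ref{scalLemma}.
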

\begin{proof}
    The arguments are identical as in the proof of Corollary~\ref{cor:spec_decomp_sig}.
\end{proof}
As already mentioned in Remark~\ref{rmk:flux}, the fermionic flux operator does not depend on the flux 
through the Cauchy horizon explicitly, even though the non-singular part of the mass decomposition 
is a function of the full boundary matrix as stated in Theorem~\ref{thm:massDecompTheo}. This is 
caused by the radial independence of the Wronskian for $m' \rightarrow m$, which cancels the 
interior contributions from both horizons.

Physically, the fermionic flux operator describes the 
flux of the Dirac current through the event horizon.
Note that, after crossing the event horzion, a Dirac wave necessarily also crosses
the Cauchy horizon. The full propagation of a Dirac particle $\Psi$ is shown in 
Figure~\ref{fig:penrose_diag}. After 
interacting on the Cauchy surface (green) $N_\tau$, one part of the wave crosses the 
event horizon 
at $r_+$ and travels all the way up to the boundary $\partial \mm$, which shields the 
singularity and 
reflects the wave into the analytic extension. This part encodes the fermionic flux 
operator $\mathfrak
{B}_m$.The other part of the Dirac wave propagates to future infinity, where the particle 
can be detected.

We finally point out that both the fermionic signature operator~$\S_m$
and the fermionic flux operator~$\mathfrak{B}_m$ depend on the global
geometry of spacetime. We saw that their eigenvalues depend on
whether we consider the spacetime up to the Cauchy horizon or
whether we consider only the exterior geometry. This means in particular that,
if an observer at infinity could determine the eigenvalues of~$\S_m$
or~$\mathfrak{B}_m$, he could gain information on the interior geometry
of the black hole.
%%%%%%%%%%%%%%%%%%%%%%%%%%%%%%%%%%%%%%%%%%%%%%%%%%%%%%%%%%%%%%%%%%%%%%%%%%%%%%%%%%%%%%%%%%%%%%%%%%%%
%%%%%%%%%%%%%%%%%%%%%%%%%%%%%%%%%%%%%%%%%%%
%%%%%%%%%%%% Quellen %%%%%%%%%%%%%%%%%%%%%%%%%%%%%%%%%%%%%%%

\Thanks{
    {{\rm{Acknowledgments: C.K.\ gratefully acknowledges support by the 
    Heinrich-B\"oll-Stiftung.}}}
}

\bibliographystyle{amsplain}
%\bibliography{./../quellen}
\providecommand{\bysame}{\leavevmode\hbox to3em{\hrulefill}\thinspace}
\providecommand{\MR}{\relax\ifhmode\unskip\space\fi MR }
% \MRhref is called by the amsart/book/proc definition of \MR.
\providecommand{\MRhref}[2]{%
  \href{http://www.ams.org/mathscinet-getitem?mr=#1}{#2}
}
\providecommand{\href}[2]{#2}

\end{document}